\documentclass[
    aps, twocolumn,
    nobibnotes,
    superscriptaddress,10pt]{revtex4-2}

\usepackage{graphicx}
\usepackage{float}

\makeatletter
\let\newfloat\newfloat@ltx
\makeatother

\usepackage{dcolumn}
\usepackage{bm}
\usepackage[breakable,skins]{tcolorbox}
\usepackage{physics}
\usepackage{amsthm}
\usepackage{amssymb}
\usepackage{hyperref}
\usepackage{algorithm}
\usepackage[noend]{algpseudocode}

\algblockdefx{MRepeat}{EndRepeat}{\textbf{repeat} }{}

\usepackage{quantikz}

\usepackage{xcolor}

\newcommand{\fref}[1]{Fig.~\ref{#1}}
\newcommand{\past}[1]{\overleftarrow{#1}}
\newcommand{\future}[1]{\overrightarrow{#1}}
\renewcommand{\eqref}[1]{Eq.~(\ref{#1})}

\newcommand{\prob}{\mathbb{P}}

\DeclareMathAlphabet{\bboldbb}{U}{bbold}{m}{n}
\newcommand{\one}{\bboldbb{1}}
\newcommand{\chU}{\mathcal{U}}
\newcommand{\chW}{\mathcal{W}}

\newcommand{\inlineheading}[1]{\textbf{#1~-- }\ignorespaces}

\theoremstyle{plain}
\newtheorem{thm}{Theorem}

\newtheorem{prop}[thm]{Proposition}

\theoremstyle{definition}

\theoremstyle{remark}

\begin{document}
\title{Learning Quantum-Samplers for Stochastic Processes with Quantum Sequence Models}

\author{Ximing Wang}
\email{canoming.sktt@gmail.com}
\affiliation{
Nanyang Quantum Hub, School of Physical and Mathematical Sciences, Nanyang Technological University, Singapore
}
\affiliation{
Centre for Quantum Technologies, Nanyang Technological University, Singapore
}
\author{Chengran Yang}
\affiliation{
Nanyang Quantum Hub, School of Physical and Mathematical Sciences, Nanyang Technological University, Singapore
}
\affiliation{
Centre for Quantum Technologies, Nanyang Technological University, Singapore
}

\author{Chidambaram Aditya Somasundaram}
\affiliation{
Nanyang Quantum Hub, School of Physical and Mathematical Sciences, Nanyang Technological University, Singapore
}
\affiliation{
Centre for Quantum Technologies, Nanyang Technological University, Singapore
}

\author{Jayne Thompson}
\email{jayne.thompson@ntu.edu.sg}
\affiliation{
College of Computing and Data Sciences Nanyang Technological University, Singapore
}
\affiliation{
School of Physical and Mathematical Sciences, Nanyang Technological University, Singapore
}
\affiliation{
Centre for Quantum Technologies, Nanyang Technological University, Singapore
}

\author{Mile Gu}
\email{gumile@ntu.edu.sg}
\affiliation{
Nanyang Quantum Hub, School of Physical and Mathematical Sciences, Nanyang Technological University, Singapore
}
\affiliation{
Centre for Quantum Technologies, Nanyang Technological University, Singapore
}
\affiliation{MajuLab, CNRS-UNS-NUS-NTU International Joint Research Unit, UMI 3654, Singapore 117543, Singapore}
\date{\today}

\begin{abstract}Quantum circuits that generate coherent superpositions of stochastic processes are key to many downstream quantum-accelerated tasks, such as risk analysis, importance sampling, and DNA sequencing. However, traditional methods for designing such circuits from data face immense challenges, given the exponential growth in the size of the associated probability vectors as the desired simulation time horizon increases. Here, we introduce quantum sequence models that leverage a recurrent quantum circuit structure to generate coherent superpositions with circuit complexity that grows linearly with the desired time horizon; together with a recurrent variant of the parameter-shift rule, we train these models from observational data. When benchmarked against baseline quantum Born machines, our constructions exhibit orders-of-magnitude improvements in model accuracy in data-sparse regimes.
\end{abstract}

\maketitle

\noindent\inlineheading{Introduction} Monte Carlo simulation is a key tool in forecasting future behavior in stochastic processes - allowing us to sample future trajectories and thus infer the likelihood of various future outcomes~\cite{rubinstein2016,gilks1995,hastings1970}. In such tasks, quantum computing devices offer a distinct advantage. Rather than producing individual samples, they enable \emph{quantum samplers} that can coherently generate \emph{quantum samples (q-samples)} - quantum states whose probability amplitudes encode all possible sequences in superposition weighted by their corresponding probabilities. Provided we know a quantum circuit realization of such \emph{quantum samplers}, their use as a subroutine in various quantum algorithms enables powerful downstream applications, including quantum-accelerated risk analysis~\cite{woerner2019}, importance sampling~\cite{blank2021}, and motif detection in genomic data~\cite{montanaro2017}.

Such algorithms typically assume that a quantum circuit to generate such q-samples is readily available. In real-world settings, however, we typically begin with a finite string of observational data drawn from some underlying unknown process. Thus, the capacity to leverage the above advantages depends crucially on our capability to infer quantum samplers from this data. In general, this is non-trivial. The probability distributions of a stochastic process over $L$ time-steps are described by a probability vector of size exponentially in $L$. Thus, systematic methods to design a quantum sampler for $P$ exhibit gate complexity with similar exponential growth~\cite{binder2018, yang2025}. Meanwhile, standard variational circuits to learn this q-sample would suffer from exponentially suppressed gradients~\cite{mcclean2018, wang2021a}.

\begin{figure}[htp]
    \centering
    \includegraphics[width=0.98\linewidth]{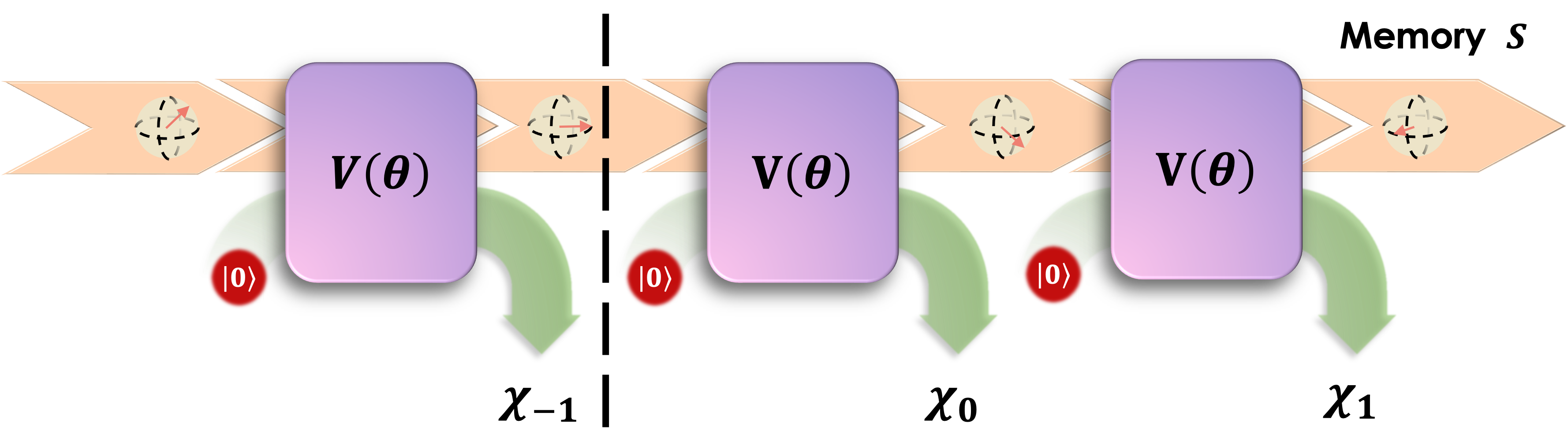}
    \caption[\textbf{Quantum Sequence Models.}]{\textbf{Quantum Sequence Models.} \label{fig:QModel} A quantum sequence model $\mathcal{Q}_{\theta}$ consists of a memory $S$ and a sequence of output registers $\qty{\chi_t}$. Its dynamics is governed by a unitary $V(\vb*{\theta})$, which interacts $S$ with $\chi_t$ at each time-step $t$. Here, the $\vb*{\theta}$-dependence represents some parameterization of such sequence models. We say that $\mathcal{Q}_{\theta}$ is a quantum-sampler for a stochastic process $Q_{\theta}(\past{X},\future{X})$ if its resulting measurement statistics, when each $\chi_t$ is measured, are governed by $Q$. Our goal is to learn $\vb*{\theta}$ such that $Q$ closely approximates a given target stochastic process $P$.}
\end{figure}

Under this context, we introduce tools to train \emph{quantum sequence models} - a recurrent quantum circuit in which a single parameterized quantum operation is applied repeatedly to an internal quantum memory and a new output register at each time-step (illustrated in \fref{fig:QModel}). When given a string of data sampled from a stochastic process $P$, the resulting recurrent quantum circuit can prepare the output registers in the $L$-step q-sample in gate complexity linear in $L$. We compare our resulting training methods to Born machines~\cite{gili2023, liu2018} - conventional variational quantum circuits designed to generate q-samples - and demonstrate that the recurrent structure has a significant learning advantage: it delivers q-samplers with orders-of-magnitude improvements in statistical accuracy on limited observational data.

\noindent\inlineheading{Stochastic Modeling and Q-Sampling} We begin by assuming that our system of interest is governed by some underlying stochastic process $P$ with random variable $X_t$ that emits a sequence of outputs $x_t$ at each timestep $t\in\mathbb{Z}$~\cite{khintchine1934}. Taking $t = 0$ as the present, the statistics of such systems can be described by the probability function $\smash{P(\past{X}, \future{X})}$ that correlates $\past{X} := \ldots X_{-2},X_{-1}$ with future $\future{X} = X_0, X_1, X_2,\ldots$. Here we assume $P$ is stationary, such that $P(\past{X}, \future{X})$ has no explicit dependence on $t$. We thus take $t = 0$ as the present without loss of generality.

A model of $P$ is \emph{statistically faithful} if it replicates the behavior of the statistical process. That is, given it has emitted past $\past{x}$, its corresponding future statistics are governed by $P(\future{X}|\past{x})$. In our context, we wish to build statistically faithful \emph{quantum samplers}. Consider an output register $\chi_t$ for each time-step $t$ and some ancillary memory register $S$, all expressed as qubits initially in state $\ket{0}$. The faithful quantum-sampler is then a unitary circuit that prepares a tri-partite quantum state $\ket{\Phi}_{\past{\chi},\future{\chi},S}$ such that measurement of $\past{\chi}$ in the computational basis with outcomes $\past{x}$ collapses $\past{\chi}$ to \emph{conditional q-samples}
$\ket{\Phi}^{\past{x}}_{\future{\chi},S}$ such that measurements of $\future{\chi}$ in the computational basis yields outcomes $\future{x}$ with probability
$P(\future{X} = \future{x}|\past{x})$.

In practice, finite data imply that any models we learn will never exactly exhibit the desired distribution $P(\future{X}| \past{x})$.  Distortion of the statistical outputs of the model is unavoidable. A key measure of distortion is the KL-divergence~\footnote{The KL-divergence of a probability distribution $P$ from a probability distribution $Q$ is defined as $D_{KL}(P,Q)={\Sigma_{x}} P(x)\log\frac{P(x)}{Q(x)}$.}, given its clear operational meaning across operational and physical contexts. Consider a candidate quantum sampler $\mathcal{Q}$ that exhibits outcomes $\future{x}$ with probability $\smash{Q(\future{X}|\past{x})}$. Let $\smash{\vb{p}_L^{\past{x}} = P(\future{X}_{L}|\past{x})}$, $\smash{\vb{q}_L^{\past{x}} = Q(\future{X}_{L}|\past{x})}$, with $\future{X}_{L} = X_0 X_1 \ldots X_{L-1}$. We then benchmark the quantum sampler by the KL-divergence rate~\cite{rached2004}
\begin{equation}\label{eq:divergeRate}
    D_e(P\Vert Q) := \lim_{L\to \infty}  \frac{1}{L}\sum_{\past{x}} P(\past{x}) D_{\mathrm{KL}}(\vb{p}_L^{\past{x}}\Vert \vb{q}_L^{\past{x}}).
\end{equation}
representing how quickly the future behavior of our q-sample deviates from that of the target distribution (when averaged across all pasts). Our goal is to learn quantum samplers where this rate is minimal.

\noindent\inlineheading{Finite Time Horizons} The above framework deals with the ideal setting where we have access to the infinite past and the infinite future, and can benchmark our models across this bi-infinite time-interval. In practice, it is usually impossible to generate a bi-infinite quantum sample, nor evaluate $D_e(P||\mathcal{Q})$ in the infinite limit. Instead, we likely care about causal dependencies up to $M$ time-steps into the past and about statistical faithfulness of predictions $L$ time-steps into the future. Let $\past{x}_M = x_{-M:0}$, this motivates the finite-horizon KL-divergence rate

$$D_{L,M}(P,Q) = \frac{1}{L}\sum_{\past{x}_M} P(\past{x}_M) D_{\mathrm{KL}}(\vb{p}_L^{\past{x}_M}\Vert \vb{q}_L^{\past{x}_M})$$

\noindent which evaluates the distortion of a candidate model based on how well its conditional distribution $\vb{q}_L^{\past{x}_M} = Q(\future{X}_L|\past{x}_M)$
matches that of the true distribution $\vb{p}_L^{\past{x}_M} = P(\future{X}_L|\past{x}_M)$ over time interval $-M \leq t < L$. In analogy, we define a q-sampler $\mathcal{Q}_{L,M}$ over this time interval as one that synthesizes the corresponding quantum sample $\ket{\Phi}_{\past{\chi}_M, \future{\chi}_L,S}$. When the context is clear, we drop the subscripts $L$ and $M$ for succinctness.

\noindent\inlineheading{Quantum Sequence Models}
We consider quantum sequence models as a potential model class to train such q-samplers. Such models are based on a recurrent quantum circuit $\mathcal{Q}_{\vb*{\theta}}$ (see~\fref{fig:QModel}). The circuit consists of (a) a sequence of output registers $\chi_t$ (b) an ancillary memory system $S$ that sequentially interacts with each output register via a unitary interaction parametrized by the vector of parameters $\vb*{\theta}$.
At each timestep $t$ from $t = - M$ to $t = L-1$, we interact the output register $\chi_t$ with $S$ via a unitary $V_{\vb{\theta}}$.  This then generates a quantum state $\ket{\Phi}_{\past{\chi}_M, \future{\chi}_L,S}$ across output registers and the ancillary memory system.

The recurrent structure distinguishes our model from the conventional variational settings in that
(i) As the same unitary $V_{\vb{\theta}}$ is applied at each time-step, the number of trainable parameters remains constant regardless of the length of the output sequence.
(ii) When the memory system interacts with an output system, the input state of the output system is always fixed to $\ket{0}$.
Applying $V_{\vb{\theta}}$ recurrently simplifies the circuit structure, where the number of parameters will not scale with the length of the output sequence.

In supplementary materials $D$, we outline a systematic means to parametrize $V_{\vb{\theta}}$. Here we illustrate that as initial register states are fixed at $\ket{0}$, we can further reduce the number of parametrized gates in the unitary $V_{\vb{\theta}}$ without losing expressivity. For example, for 2-qubit $V_{\vb{\theta}}$ (i.e., binary outputs with a single qubit memory), a full parameterization of $V_{\vb{\theta}}$ involves $15$ parameters~\cite{mottonen2004}. In the context of sequence models, however, only $8$ parameters are sufficient.

\noindent\inlineheading{Model Learning} Our model learning process begins with a data string $\mathcal{T}$ of length $T$ drawn from the target stochastic process $P$. Through binning and frequency counting, this data sequence then defines the empirical conditional distribution $\hat{P}(\future{X}_L|\past{x}_M)$ that converges to the true distribution $P(\future{X}_L|\past{x}_M)$ in the limit of infinite training data, i.e., as $T \to \infty$ (see Supplementary Materials section A.). Since we do not have access to the true distribution, the training phase of a learning algorithm uses $\hat{P}$ as a proxy for $P$. That is, from $\mathcal{T}$, we first train a quantum sampler
$\mathcal{Q}$ that generates samples of $Q(\future{X}_L|\past{x}_M)$. We can benchmark its performance based on how well it aligns with $\hat{P}(\future{X}_L|\past{x}_M)$. This is then given by the empirical distortion $\hat{D}(\mathcal{Q}) = D_{L,M}(\hat{P}\Vert Q)$. The difference between this and the true distortion $
\Delta D(\mathcal{Q}) = D(\mathcal{Q}) - \hat{D}(\mathcal{Q})$
is often referred to as the \emph{generalization gap}~\cite{caro2022,banchi2021}; representing the performance difference of a model between training data and unseen data.

Here, we begin with a quantum model class $\{Q_\theta\}$ parameterized by $\theta$. Our first step in training is to outline a cost function $f(\mathcal{Q}_{\vb{\theta}})$ that (i) is a faithful proxy to $\hat{D}(\mathcal{Q}_\theta)$ such that they are minimal on the same values $\theta$ and (ii) whose gradient with respect to $\theta$ can be efficiently computed by a quantum device that can execute $\mathcal{Q}_{\theta}$. The parameter shift rule represents industry-standard in variational circuits, where the trainable parameters are the angles in parameterized Pauli rotation gates~\cite{schuld2019} or time in Hamiltonian evolutions~\cite{banchi2021a}. However, to apply it to our model, we need to begin by generalizing this rule to accommodate the recurrent structure of quantum sequence models.

Specifically, consider a parameterized recurrent quantum circuit $\mathcal{Q}_{\vb*{\theta}}$ with repeated interactions described by $V(\vb*{\theta})$. In this context, the unitary $V(\vb*{\theta})$ is repeated $M+L$ times sequentially as in FIG.~\ref{fig:QModel}. As in the standard parameter shift rule, we assume that each $\theta \in \vb*{\theta}$ parametrizes $U$ by specifying the rotation angles $e^{i \frac{\theta}{2} \sigma}$, where $\sigma$ is any Pauli operator.
Without loss of generality, we focus here on a single parameter (see Supplementary Materials section B). Let $\ket{\Phi}_{\past{\chi}_M, \future{\chi}_L,S}$ be resulting quantum sample generated by the recurrent circuit, and let $Q(x_{-M:L})$ be its associated probability of outputting $x_{-M:L}$ when output registers are appropriately measured. Consider now a cost function $\lambda(x_{-M:L})$ that associates each visible sequence with a cost. We can then define the associated loss function as

\begin{equation}f(\mathcal{Q}_\theta) = \sum_{x_{-M:L}} \lambda(x_{-M:L}) Q(x_{-M:L}) \label{eq:definition_weighted_mean}\end{equation}

\noindent representing the expected cost under the distribution $Q$. In Algorithm~\ref{alg:ParameterShift}, we outline a system to efficiently estimate $\frac{\delta}{\delta \theta} f(\mathcal{Q}_{\theta})$ for any such $f$. The key idea is to introduce an alternative quantum circuit $\mathcal{Q}^{s,i}_{\theta}$ which replaces the $i$th application of $V(\theta)$ replaced by $V(\theta + s\frac{\pi}{2})$ for $s \in \{-1,+1\}$. The gradient can then be suitably estimated to multiplicative accuracy $\epsilon$ using $O\qty(\epsilon^{-2})$ calls to $\mathcal{Q}^{s,i}_{\theta}$ over a uniform sampling of $i$ and $s$. Our next theorem confirms that this is a valid and unbiased estimator.

\begin{thm}[Recurrent Parameter Shift]
    \label{thm:ParameterShift} Let $f$ be a linear expectation functional with associated cost functions $\lambda(x_{-M:L})$, the output $G$ of Algorithm~\ref{alg:ParameterShift} is an unbiased estimator of the gradient $\pdv{\theta}f(\mathcal{Q}_\theta)$.
\end{thm}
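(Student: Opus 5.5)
The plan is to reduce the recurrent circuit to a sum of single-occurrence parameter-shift problems, using the product rule over the $N := M+L$ copies of $V(\theta)$.

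\textbf{Step 1: product rule.} Introduce auxiliary independent parameters $\theta_1,\dots,\theta_N$, one for each application of $V$, and write $Q_{\theta_1,\dots,\theta_N}(x_{-M:L})$ for the resulting output distribution. The recurrent model is the restriction to the diagonal $\theta_i=\theta$. Because the state is a smooth (trigonometric polynomial) function of each $\theta_i$, the chain rule gives
\begin{equation*}
\pdv{\theta} f(\mathcal{Q}_\theta) = \sum_{i=1}^{N} \left.\pdv{\theta_i} f\right|_{\theta_1=\dots=\theta_N=\theta}.
\end{equation*}
Since $f=\sum \lambda(x_{-M:L})Q(x_{-M:L})$ is linear in $Q$, it suffices to treat each $\partial_{\theta_i} Q(x_{-M:L})$.

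\textbf{Step 2: standard shift rule for one occurrence.} Fix $i$. The probability $Q(x_{-M:L})$ is an expectation value $\langle \psi_i|U_i(\theta_i)^\dagger A\, U_i(\theta_i)|\psi_i\rangle$. Here $|\psi_i\rangle$ is the state prepared by the gates before the $i$th rotation, $U_i(\theta_i)=e^{i\theta_i\sigma/2}$ is the rotation gate inside the $i$th copy of $V$, and $A$ is the Heisenberg-evolved projector onto outcome $x_{-M:L}$ through the remaining gates. This is exactly the setting of the ordinary parameter-shift rule, since $\sigma^2=\one$ implies $e^{i\theta\sigma/2}=\cos(\theta/2)\one+i\sin(\theta/2)\sigma$. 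The expectation is therefore of the form $a+b\cos\theta_i+c\sin\theta_i$, so
\begin{equation*}
\pdv{\theta_i} Q = \tfrac12\big[Q^{+,i}_\theta - Q^{-,i}_\theta\big],
\end{equation*}
where $Q^{s,i}_\theta$ is the distribution of $\mathcal{Q}^{s,i}_\theta$. If $\theta$ appears in several gates within $V$, the same argument applies gate by gate. Supplementary B reduces to the single-gate case, and we take that as given.

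\textbf{Step 3: unbiasedness of the sampling.} Combining the two steps,
\begin{equation*}
\pdv{\theta} f = \sum_{i=1}^N \sum_{s=\pm1} \frac{s}{2}\sum_{x}\lambda(x)Q^{s,i}_\theta(x) = \mathbb{E}_{i,s,x}\big[N\,s\,\lambda(x)\big],
\end{equation*}
with $i$ uniform on $\{1,\dots,N\}$, $s$ uniform on $\{\pm1\}$, and $x\sim Q^{s,i}_\theta$. The uniform choice of $(i,s)$ has probability $1/(2N)$, which cancels the factor $N\cdot\tfrac12\cdot 2$. So each single-shot value $N s\lambda(x)$ is an unbiased estimate, and so is the average $G$ that Algorithm~\ref{alg:ParameterShift} outputs, by linearity of expectation. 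The precise normalisation constant has to be matched to whatever rescaling the algorithm uses.

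\textbf{Main obstacle.} The delicate point is Step 2: justifying that shifting one occurrence leaves the rest of the circuit, including the fixed $\ket{0}$ inputs and the mid-circuit structure, as a fixed pre/post-processing. This holds because the measurement is only at the end and all other gates are $\theta_i$-independent once the parameters are decoupled.
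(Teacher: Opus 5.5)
Your proof is correct and follows essentially the same route as the paper's Supplementary B. The paper applies the product rule over the $M+L$ occurrences of the $\theta$-dependent rotation, which you phrase as the multivariable chain rule after decoupling the parameters. It then uses the single-gate shift identity for each occurrence, and notes that uniform sampling of $(i,s)$, each pair having probability $1/(2(M+L))$, makes $(M+L)\,s\,\lambda(x_{-M:L})$ an unbiased single-shot estimate.

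Your normalisation already coincides with the algorithm's $g_k$, so no rescaling is needed. Note, however, that Algorithm~\ref{alg:ParameterShift} shifts the whole $i$-th copy of $V$ with prefactor $M+L$, which presupposes (as the paper assumes) that $\theta$ enters each copy through exactly one Pauli rotation. Your ``gate by gate'' variant would instead require sampling over individual gates with a correspondingly larger prefactor.
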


\begin{algorithm}[tbh]
\caption{Parameter-Shift Rule (Recurrent)}\label{alg:ParameterShift}
\begin{algorithmic}[1]
    \State \textbf{Input:} A recurrent circuit $\mathcal{Q}_{-M,L}(\theta)$ with $L+M$ repetitions of $V(\theta)$ as in Fig. \ref{fig:QModel};
    \State \textbf{Input:} A set of cost values $\lambda(x_{-M:L})$ as in \eqref{eq:definition_weighted_mean};
    \State \textbf{Input:} Number of samples $m$;
    \State \textbf{Initialize} empty list $\mathbf{g}$;
    \For{$k = 1\ldots m$}
        \State Uniformly sample a sign $s \in \qty{-1, +1}$ and an index $i \in [-M, L-1]$;
        \State Apply a shift to the parameter of the $i$-th instance of $V(\theta)$ in $\mathcal{Q}_{\theta}$: $\theta \to (\theta + s\frac{\pi}{2})$;
        \State Prepare the quantum sample $\ket{\Phi}_{\past{\chi}_M, \future{\chi}_L,S}^{s,i}$ using this modified circuit;
        \State Measure $\ket{\Phi}_{\past{\chi}_M, \future{\chi}_L,S}^{s,i}$ in the computational basis to obtain the string $x_{-M:L}$;
        \State Append $g_k = (M+L) s \lambda(x_{-M:L})$ to $\bf{g}$;
    \EndFor
    \State \textbf{End for}
    \State Set $G = \frac{1}{m}\sum_{j=1}^m g_j$ as the mean of $\mathbf{g}$;
    \State \textbf{Output:} $G$;
\end{algorithmic}
\end{algorithm}

\noindent\inlineheading{A Proxy for Distortion} The above algorithm leads to estimates of $D(\mathcal{Q})$, that have significant sensitivity to shot noise. This motivates us to introduce the co-emission distortion rate as a more computationally efficient proxy for distortion during training

\begin{equation}\label{eq:CoemitDiverge}
    \begin{aligned}
    D_A^L(\mathcal{Q})
    = -\frac{1}{L} \sum_{\past{x}_M} P(\past{x}_M)\log \frac{\vb{p}^{\past{x}_M}_L \cdot \vb{q}^{\past{x}_M}_L}
    {\norm{\vb{p}^{\past{x}_M}_L}\norm{\vb{q}^{\past{x}_M}_L}}.
    \end{aligned}
\end{equation}
where $\norm{\cdot}$ is the Euclidean norm and $\cdot$ is the standard vector dot product. The co-emission distortion holds the operational significance in context ranging from data-mining to natural language processing~\cite{singhal2001,church2017,reimers2019,wang2008,liang2019}, and is a faithful measure of statistical distortion~\cite{yang2020}.  In supplementary materials C, we show that the gradients of each of the $\theta$ dependent terms, $u(\theta) = \hat{\vb{p}}^{\past{x}_M}_L\cdot \hat{\vb{q}}^{\past{x}_M}$ and $v(\theta) = \norm{\hat{\vb{q}}^{\past{x}_M}_L}$, are efficiently measurable on a quantum computer.

This is done by observing that both $u(\theta)$ and $v(\theta)$ can be written in the form of a linear expectation functional (For example, to evaluate $\hat{\vb{p}}^{\past{x}_M}_L\cdot\hat{\vb{q}}^{\past{x}_M}_L(\vb*{\theta})$, we take $\lambda(x_{-M:L})$ to be $\hat{\vb{p}}^{\past{x}_M}_L$). Theorem~\ref{thm:ParameterShift} then implies that their gradients with respect to $\theta$ can be efficiently evaluated using our Recurrent Parameter Shift Rule (Algorithm 1). Application of the chain rule then allows us to write $D_A^L(\mathcal{Q}_\theta)$ directly in terms of these quantities. More details are available supplementary information C.

\noindent\inlineheading{Performance Benchmarking} To illustrate the efficacy of quantum sequence models, we benchmark them in the learning of a uniform renewal process~\cite{pyke1961}  (see \fref{fig:renewal}). Renewal processes represent a generalization of Poisson processes. Both involve monitoring a system that, at each time $t$, chooses to emit a tick ($x_t = 1$) or not ($x_t = 0$). Whereas in Poisson processes, the probability of ticking at each time-step is constant, renewal processes generalize this in that the ticking probability can depend on the number of time-steps that have elapsed since the last tick. Such renewal processes appear in diverse time-series settings. Examples include neural spike trains~\cite{crutchfield2015}, device lifetimes~\cite{doob1948}, stochastic clocks~\cite{woods2022}, bus waiting times.

\begin{figure}[htbp]
    \centering
    \includegraphics[width=0.95\linewidth]{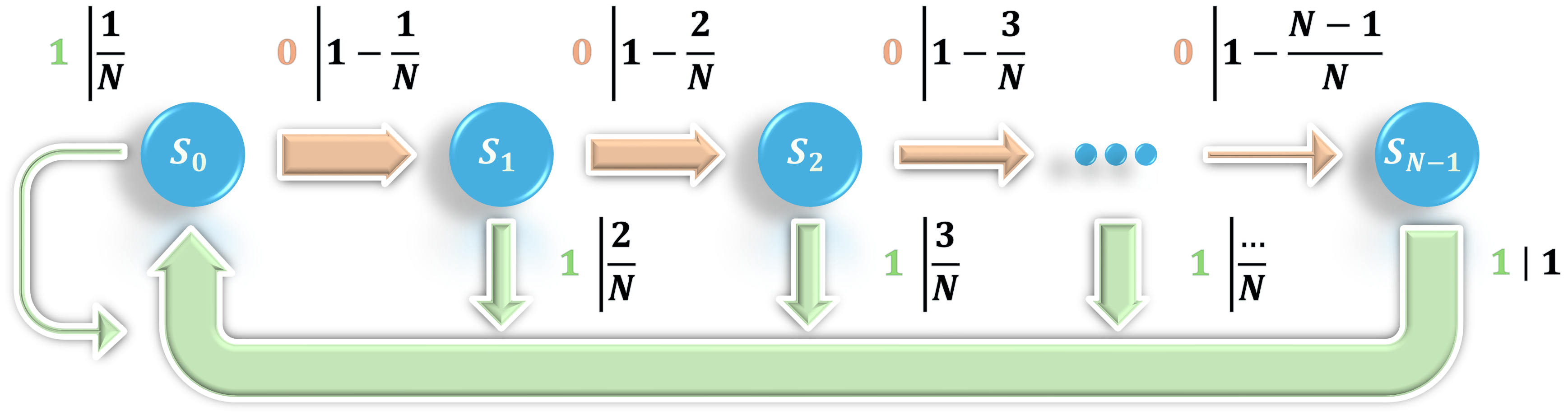}
    \caption{\textbf{Uniform renewal process}. The dynamics of a uniform renewal process $P_N$ or order $N$ can be described as a hidden Markov model (HMM) with $S_0, S_1 \ldots S_{N-1}$ internal states - represented by blue nodes on the directed graph above. The edges between nodes then describe transitions between internal states, where an edge between $S_j$ and $S_k$ with labels
    labeled by $x|p$ represents the statement that a model in state $S_j$ at time-step $t$ will output $x$ and transition to state $S_k$ with probability $p$. For the renewal process above, the HMM describes a binary sequence of mostly $0$s (no-tick) punctuated by $1$s (tick), where the number of $0$s between $1$s is uniformly distributed between $0$ and $N-1$. To enable such statistics, the $HMM$ must have $N$ states, as the probability of the next tick depends on how many time-steps have occurred since the previous tick.}
    \label{fig:renewal}
\end{figure}

Here we consider a family of uniform renewal processes $P_N$ with maximum inter-tick times ranging from $N = 3$ to $N = 8$ states \fref{fig:renewal}. For each $N$, the training data is a sequence of $T$ symbols drawn from the process. We then consider two families of sequence models: (i) a $1$-qubit-memory sequence model with full expressivity and (ii) a $2$-qubit-memory  sequence model built from three layers of a $3$-qubit hardware-efficient ansatz. All models are trained for next-token prediction based on a history of length $8$ ($M = 8$, $L=1$). As a non-recurrent baseline, we consider a variational quantum Born machine consisting of $4$ layers acting on a system of $9$-qubits (See Supplementary materials D for model details). In all scenarios, we assume no pre-knowledge of the underlying process.

Fig. \ref{fig:results}a illustrates the resulting performance with training data size $T = 50,000$, as quantified by the KL-divergence rate. For each model $\mathcal{Q}$, `empirical' refers to the empirical distortion $\hat{D}(\mathcal{Q})$ against available training data. Meanwhile, `true' refers to its performance against the true underlying distribution $P$. We note also that our baseline variational circuit model contains the greatest number of parameters ($140$ for baseline vs $8$ for 1-qubit memory and $33$ for 2-qubit memory), we would generally expect it to outperform recurrent candidates in empirical performance (as it has more degrees of freedom). While this is indeed true against the single-qubit sequence mode, the two-qubit sequence model achieved significantly better performance even on training data. This is somewhat counterintuitive, but it is a consequence that recurrent circuits were much easier to train owing to the mitigation of vanishing gradients (See appendix~\ref{sec:ansatzes} for comparison of gradient amplitudes).

Of course, what really matters is true distortion. Here, the advantage of recurrent models is even more readily apparent. We observe that the true distortion performance of our $2$-qubit-memory model (blue) is well over an order-of-magnitude advantage below that of the variational circuit baseline (orange). In fact, even a single qubit of memory exhibits a notable performance advantage over half the baseline distortion. This extra advantage arises from the reduced generalization gap. Baseline variational circuits always exhibited lower empirical distortion over true distortion, reflecting the usual positive generalization gap. In contrast, our recurrent models reverse this and perform better when measured against the true distribution versus that against training data.

This advantage is further magnified when available training data is scarce. In Fig. \ref{fig:results}b, we illustrate training of the on the $\tau = 5$ process using training data sequences of $500$ and $5000$. In these data-scarce settings, the baseline variational circuit exhibits extreme overfitting. While its empirical performance does not visibly change, the true KL-divergence rates increase by nearly 10-fold from $0.107$ to $1.06$ when training data is reduced from $T = 50000$ to $T = 500$. In contrast, our recurrent circuits exhibit a much lower increase in distortion, from $0.042$ to $0.108$. These results indicate that our recurrent models are significantly more robust to finite statistical fluctuations.

\begin{figure}[htp]
    \centering
    \includegraphics[width=.9\linewidth]{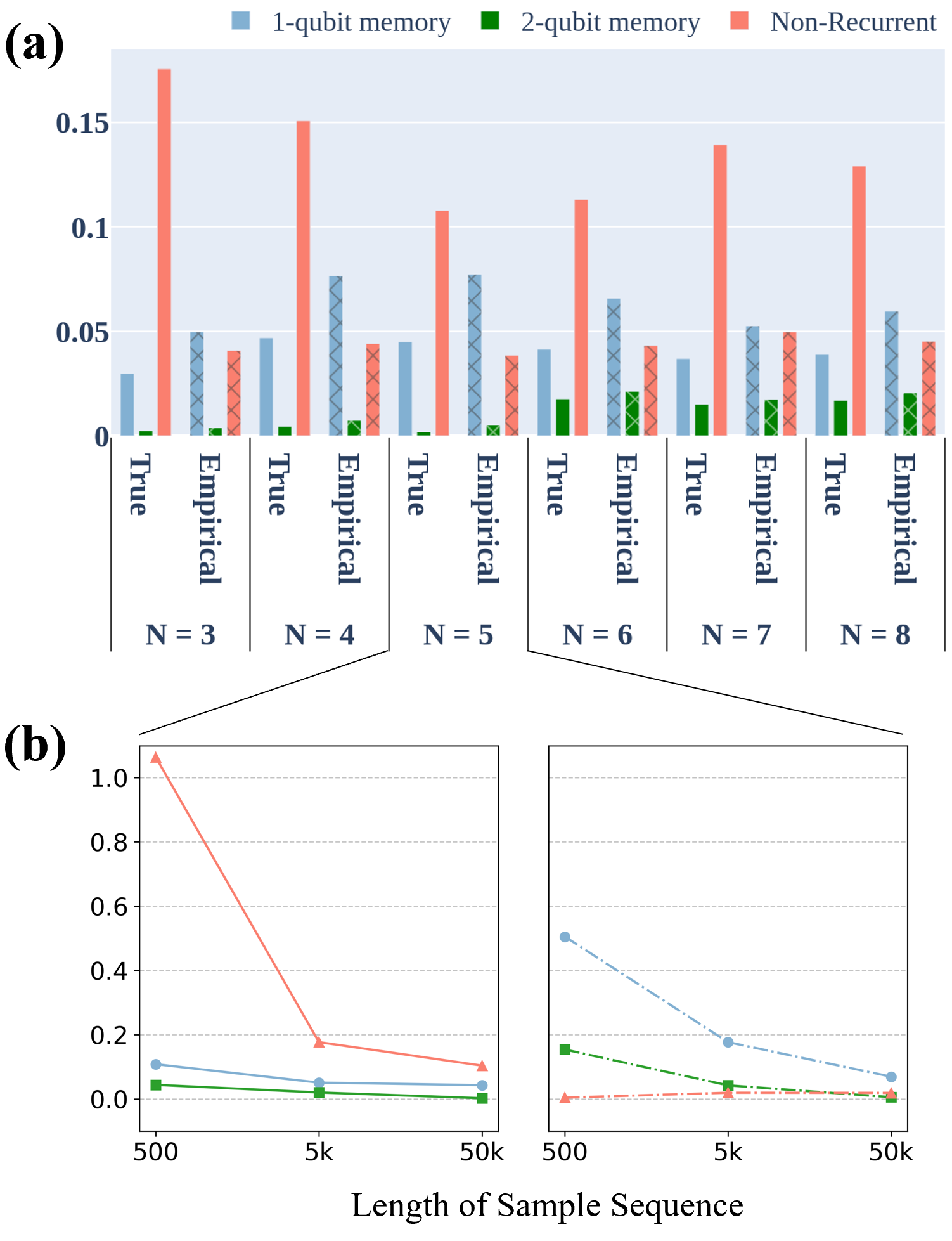}
   \caption{ \label{fig:results} \textbf{Performance Benchmarks}. (a) We compare three quantum models in terms of KL-divergence rates: recurrent quantum circuit with 1-qubit memory (blue), recurrent quantum circuit with 2-qubit memory (green), and a non-recurrent circuit (orange) when tasked to learn uniform renewal processes of order $N$ from $3$ to $8$ when training data size $t = 50,000$. The true distortions are shown as the solid bars, while the empirical distortions are shown as crossed bars. Both 1-qubit and 2-qubit recurrent models (green) clearly outperform the non-recurrent baseline, even when the former does not fit training data as well. (b) The performance advantage is magnified in data-sparse settings where training data (x-axis) is lowered to $5000$ and $500$. While the non-recurrent baseline is able to fit training data almost perfectly (dashed line on right), its true distortion (solid line on left) becomes an order of magnitude higher than its recurrent counterparts.}
    \label{fig:gen_error}
\end{figure}

\noindent\inlineheading{Discussion} Efficient quantum samplers for classical probability distributions are essential for harnessing various iconic quantum-enhanced algorithms for data analytics -- risk analysis~\cite{woerner2019}, quantum Monte Carlo integration~\cite{herbert2022}, quantum signal processing~\cite{blank2021}. Here, we introduce quantum sequence models and associated methods to train them as a way to infer such quantum samplers for stochastic processes directly from observational data. Like classical sequence models, our quantum counterparts leverage a recurrent quantum circuit structure suited for modeling temporal data. In introducing associated cost functions and a recurrent variant of the parameter-shift rule, we show how the usual tools of quantum neural networks can be directly extended to this regime. The resulting output - a quantum superposition of output sequences weighted by their statistical likelihood of occurrence - implies that our quantum sequences have immediate applications in the domains above. We benchmarked our quantum sequence models against baseline variational circuit Born machines, demonstrating significant improvement, especially in regimes with limited training data. When using $2$-qubits of recurrent memory, our sequence models exhibited distortions an order of magnitude below that of the baseline.

Our quantum sequence models complement parallel development in a larger class of recurrent quantum circuits. Quantum recurrent neural networks~\cite{bausch2020,takaki2021} and quantum reservoir computers~\cite{ghosh2019,suzuki2022}, for example, share a similar recurrent structure --- with the primary difference being that they focus on expectation values rather than quantum superposition states as output. While they do not generate quantum samples, their use of time-series prediction has already shown some promises~\cite{chen2024}. Meanwhile, extensions of our models to cases where output registers encode non-trivial states would result in quantum agents~\cite{thompson2017} - which have shown both memory and energetic advantages in executing complex adaptive operations. Further understanding their interrelations could benefit across each of these settings, from new means of adapting existing reservoir computing techniques to the generation of quantum samplers to variational circuits to discover more energy-efficient quantum adaptive agents.

Another natural question is the trainability. Standard variational quantum circuits suffer from vanishing gradients: Any universal ansatz for an $L$-qubit circuit suffers gradients that scale as $e^{-L}$~\cite{mcclean2018, wang2021a}, and thus do not scale for building q-samples of stochastic processes as we seek q-samples over longer time-horizons. In contrast, a recurrent circuit over $L$ time-steps (and thus $O(L)$ qubits) involves $L$ repeated application of the same unitary and thus does not immediately suffer the same roadblocks~\cite{larocca2025}. Instead, scaling costs are likely correlated with the amount of recurrent memory qubits. In computational mechanics, such required memory costs are known as an intrinsic measure of a process's intrinsic complexity~\cite{barnett2015,shalizi2001}. Moreover, this memory measure can be drastically reduced by the use of recurrent quantum models~\cite{elliott2020}. As such, these ideas hint that intrinsic quantum memory benefits in stochastic simulation may also directly impact what processes can be efficiently learned.

\textbf{Acknowledgements} This work is supported by the National Research Foundation of Singapore through the NRF Investigatorship Program (Award No. NRF-NRFI09-0010), the National Quantum Office, hosted in A*STAR, under its Centre for Quantum Technologies Funding Initiative (S24Q2d0009), and the Singapore Ministry of Education Tier 1 Grant RG91/25 and RT4/23. C.Y. is funded by Schmidt Sciences, LLC.

\textbf{Data Availability} The data that support the findings of this study are publicly available at GitHub~\footnote{Canoming/Research-Data. GitHub, \url{https://github.com/Canoming/Research-Data/tree/master/learning_stoch}}.

\bibliographystyle{unsrt}
\bibliography{Stoch}

\appendix

\section{Data collection from long sequences}

In this work, we typically only have access to a single long trajectory produced by a stochastic process.
To justify estimating ensemble quantities from such a trajectory, we recall a standard result of ergodic theory.

Let $(X_t)_{t \ge 0}$ be a stationary stochastic process defined on a probability space $(\Omega,\mathcal{F},\mathbb{P})$, and let $T$ denote the left-shift map $T(\dots,X_0,X_1,X_2,\dots) = (\dots,X_1,X_2,X_3,\dots)$.
The map $T$ is measure-preserving with respect to $\mathbb{P}$ and generates a $\sigma$-algebra $\mathcal{C} \subseteq \mathcal{F}$ of invariant sets, i.e.
\begin{equation}
  \mathcal{C} = \{ C \in \mathcal{F} : T^{-1}(C) = C \} \, .
\end{equation}
The Birkhoff--Khinchin theorem~\cite{muller2022} states that for any uniformly bounded measurable function
$f : \Omega \to \mathbb{R}$ with $\mathbb{E}(\abs{f}) < \infty$, the time average of $f$ along a single trajectory
converges almost surely to the conditional expectation of $f$ with respect to $\mathcal{C}$:
\begin{equation}
    \lim_{L\to\infty} \frac{1}{L} \sum_{t=0}^{L-1} f(X_t)
    = \mathbb{E}\bigl(f \,\vert\, \mathcal{C}\bigr)(X)
    \qquad \text{a.s.}
    \label{eq:birkhoff-appendix}
\end{equation}
Here, $\mathbb{E}(f \vert \mathcal{C})$ is the conditional expectation of $f$ given the invariant $\sigma$-algebra $\mathcal{C}$, and the right-hand side is evaluated along the same realization $X = (X_t)_{t \ge 0}$.

A particularly relevant case for our purposes is when $f$ is an indicator of an event.
For any measurable set $A \in \mathcal{F}$, define $f_A = \mathbb{I}_A$.
Applying~\eqref{eq:birkhoff-appendix} to $f_A$ yields
\begin{equation}
    \lim_{L\to\infty} \frac{1}{L} \sum_{t=0}^{L-1} \mathbb{I}_A(X_t)
    = \mathbb{E}\bigl(\mathbb{I}_A \,\vert\, \mathcal{C}\bigr)(X)
    = \mathbb{P}(A \,\vert\, \mathcal{C})(X)
    \qquad \text{a.s.}
\end{equation}
In other words, the time-averaged frequency of visits to $A$ converges almost surely to the conditional probability $\mathbb{P}(A \vert \mathcal{C})$, which is determined by the initial state of the process.
For cylinder events $A = \{X_{t:t+\ell-1} = x_{0:\ell-1}\}$, this implies that the empirical frequencies of finite words $x_{0:\ell-1}$ converge almost surely to the corresponding conditional probabilities of observing these words, as $L \to \infty$.

If the process is ergodic, the invariant $\sigma$-algebra $\mathcal{C}$ is trivial, so that $\mathbb{P}(A \vert \mathcal{C}) = \mathbb{P}(A)$ almost surely.
In that case, time averages converge to ensemble averages that are independent of the initial state.

\section{Parameter Shift Rule for Recurrent Quantum Circuits}

\subsection{Background of the Parameter Shift Rule}

The parameter shift rule is a method to evaluate the gradient of a function with respect to the parameters of a quantum circuit.
Without the parameter shift rule, the gradient of a function with respect to the parameters of a quantum circuit can only be approximated by the finite difference method, which is inefficient.
Given a function $f(\vb*{\theta})$ of the parameters $\vb*{\theta} = (\theta_1, \theta_2, \dots, \theta_n)$ defined with respect to a quantum circuit $\mathcal{Q}(\vb*{\theta})$, we need to evaluate the gradients $\pdv{\theta_k} f(\vb*{\theta})$ for all $k = 1, 2, \dots, n$ to optimize $\mathcal{Q}$ with gradient descending methods.
For each $k$, we are only interested in one parameter $\theta_k$.
While the $\theta_k$ are independent of each other, we can fix all other parameters $\theta_i$ for $i \neq k$.
Here, we slightly abuse the notation to denote the function as $f(\theta_k)$ by omitting other parameters.

Without the ability to evaluate the gradient directly, we may use the finite difference method to approximate the gradient.
The accuracy of the finite difference approximation
\begin{equation}
    \pdv{\theta_k} f(\theta_k) \approx \frac{f(\theta_k + \delta) - f(\theta_k)}{\delta}
\end{equation}
increases as $\delta$ decreases.
However, a small $\delta$ leads to a vanishing difference $f(\theta_k + \delta) - f(\theta_k)$, which makes the multiplicative error of the evaluation explode.
The parameter shift rule resolves the issue by expressing the derivative of a function as the mean of observable outcomes.
The accuracy of the evaluation is then independent of the choice of $\delta$.

The parameter shift rule is defined for some cost functions in form $f(\theta_k) = \tr[O \mathcal{Q}(\theta_k)\rho \mathcal{Q}^\dagger(\theta_k)]$ (the $\rho$ dependency is implicit in $f$ for tighter expressions), which is the expectation value of an observable $O$ with respect to the output state $\mathcal{Q}(\theta_k)\rho \mathcal{Q}^\dagger(\theta_k)$ of a variational circuit $\mathcal{Q}(\theta_k)$.
In the simplest case where, for each $k\in [1,n]$, there is only one Pauli rotation gate that depends on $\theta_k$, the parameter shift rule reads
\begin{equation}
    \pdv{\theta_k} f(\theta_k) = \frac{1}{2} \qty(f(\theta_k + \frac{\pi}{2}) - f(\theta_k - \frac{\pi}{2})) \ .
\end{equation}

\subsection{Generalization of the Parameter Shift Rule}

However, due to the recurrent structure of our quantum model, the same parameters are shared in every step.
This is slightly different from the conventional parameter shift rule.
To generalize the parameter shift rule to the case where multiple Pauli rotation gates depend on $\theta_k$, we first introduce the following notations.
Assume there are $n$ Pauli rotation gates $U_i$ in a variational circuit that depends on $\theta_k$.
The circuit $\mathcal{Q}$ is first divided into $2n+1$ parts:
\begin{equation}
    \mathcal{Q}(\theta_k) = W_n U_{n-1}(\theta_k) W_{n-1} \cdots W_1 U_0(\theta_k) W_0 \ ,
\end{equation}
where $U_i$ are the Pauli rotation gates, and $W_i$ are the rest of the circuit.
As shown in figure~\ref{fig:CircDecomp}, $W_i$ consists all gates between $U_i$ and $U_{i+1}$.
The choices of $W_i$ are not unique, and this choice does not affect the parameter shift rule as $W_i$ are independent of $\theta_k$.

\begin{figure}[htp]
    \centering
    \includegraphics[width=.45\textwidth]{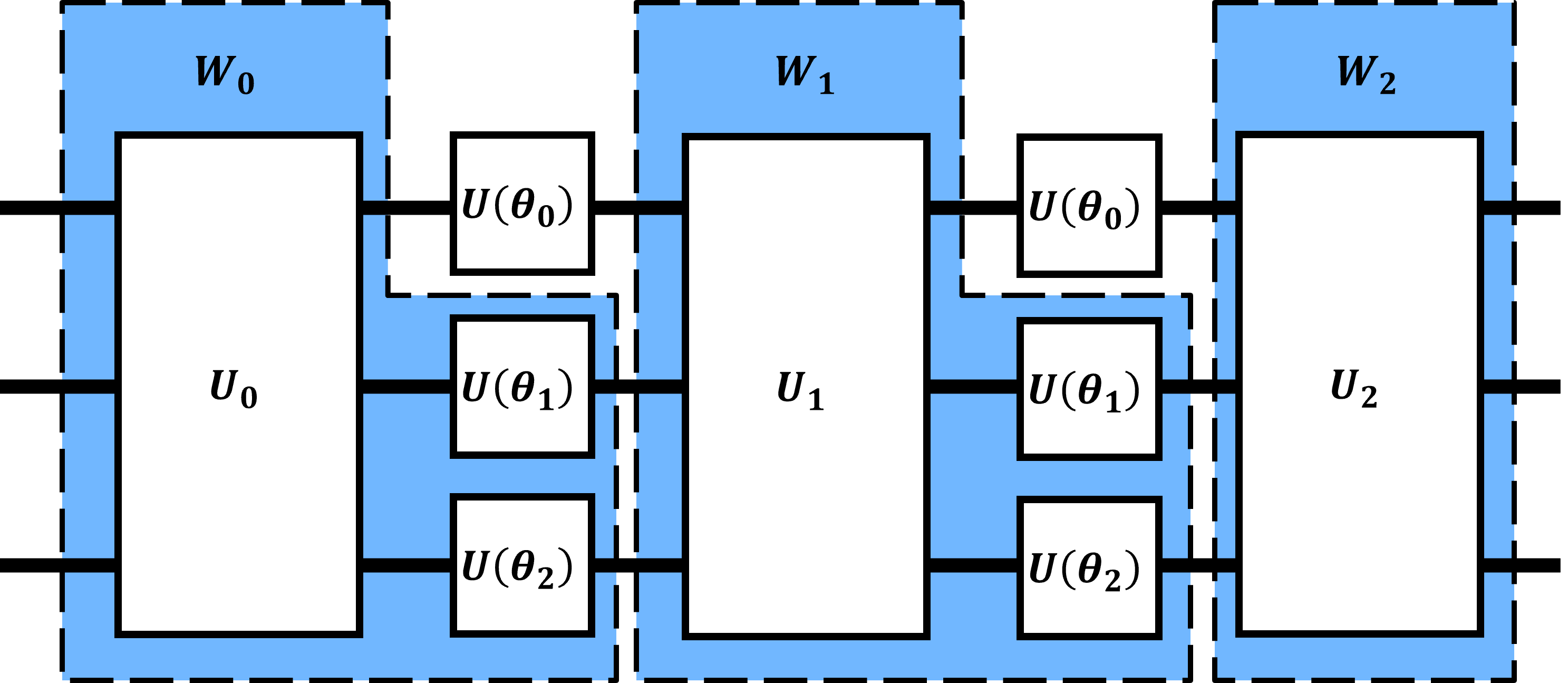}
    \caption{One possible way to decomposition of a variational circuit $\mathcal{Q}(\theta_k)$ into $2n+1$ parts.}
    \label{fig:CircDecomp}
\end{figure}

For a better analysis of the parameter-shift rule, several notations are introduced.
Let $\chU_{\theta_k}$ be the unitary channel that
\begin{equation}
    \chU_{i, \theta_k}(\cdot) := U_i(\theta_k) (\cdot) U_i^\dagger(\theta_k) \ ,
\end{equation}
and let $\chW_i$ be the channels corresponding to $W_i$ for each $i$ respectively.
When $n=2$, according to the product rule of derivatives, the derivative of the composition of multiple channels is
\begin{widetext}
\begin{equation}
    \begin{aligned}
    \pdv{\theta_k} \mathcal{Q} =&\pdv{\theta_k} \qty(\chW_2 \circ \chU_{1,\theta_k} \circ \chW_1 \circ \chU_{0,\theta_k} \circ \chW_0) \\
    = & \chW_2 \circ \qty(\pdv{\theta_k} \chU_{1,\theta_k}) \circ \chW_1 \circ \chU_{0,\theta_k} \circ \chW_0
    +\chW_2 \circ \chU_{1,\theta_k} \circ \chW_1 \circ \qty(\pdv{\theta_k} \chU_{0,\theta_k}) \circ \chW_0 \ .
    \end{aligned}
\end{equation}
\end{widetext}

As each Pauli rotation channel $\chU$ satisfies
$$\pdv{\theta}_k \chU_{i,\theta_k} = \frac{\chU_{i,\theta_k+\pi/2} - \chU_{i,\theta_k-\pi/2}}{2}\ ,$$
the parameter shift rule then reads
\begin{equation}\label{eq:PSR2gates}
    \begin{aligned}
    \pdv{\theta_k} f=&\pdv{\theta_k} \tr[O \mathcal{Q}(\rho)] \\
    =& \tr[O (\chW_2 \circ \qty(\pdv{\theta_k} \chU_{1,\theta_k}) \circ \chW_1 \circ \chU_{0,\theta_k} \circ \chW_0)(\rho)]\\
    &+ \tr[O (\chW_2 \circ \chU_{1,\theta_k} \circ \chW_1 \circ \qty(\pdv{\theta_k} \chU_{0,\theta_k}) \circ \chW_0)(\rho)] \\ 
    =& \tr[O (\chW_2 \circ \chU_{1,\theta_k+\frac{\pi}{2}} \circ \chW_1 \circ \chU_{0,\theta_k} \circ \chW_0)(\rho)]/2 \\
    &- \tr[O (\chW_2 \circ \chU_{1,\theta_k-\frac{\pi}{2}} \circ \chW_1 \circ \chU_{0,\theta_k} \circ \chW_0)(\rho)]/2 \\
    & + \tr[O (\chW_2 \circ \chU_{1,\theta_k} \circ \chW_1 \circ \chU_{0,\theta_k+\frac{\pi}{2}} \circ \chW_0)(\rho)]/2 \\
    &- \tr[O (\chW_2 \circ \chU_{1,\theta_k} \circ \chW_1 \circ \chU_{0,\theta_k-\frac{\pi}{2}} \circ \chW_0)(\rho)]/2 \ .
    \end{aligned}
\end{equation}

To simplify such an expression for general $n$, we further introduce the following notations:
\begin{equation}
    \begin{aligned}
    \mathcal{Q}_{i,\theta_k}^+ := \chW_n \chU_{n-1,\theta_k} \cdots \chW_{i+1} \chU_{i,\theta_k + \pi/2} \chW_{i}\cdots \chU_{0,\theta_k}\chW_0 \ ; \\
    \mathcal{Q}_{i,\theta_k}^- := \chW_n \chU_{n-1,\theta_k} \cdots \chW_{i+1} \chU_{i,\theta_k - \pi/2} \chW_{i}\cdots \chU_{0,\theta_k}\chW_0 \ ; \\
    \end{aligned}
\end{equation}
where $\mathcal{Q}_{i,\theta_k}^+$ is the variational circuit $\mathcal{Q}$ with the $i$-th $\theta_k$ dependent Pauli rotation gate containing being shifted by $\pi/2$, and $\mathcal{Q}_{i,\theta_k}^-$ is $\mathcal{Q}$ with the $i$-th $\theta_k$ dependent Pauli rotation gate being shifted by $-\pi/2$.
Also, let
\begin{equation}
    \begin{aligned}
    f_{i,+1} := \tr[O \mathcal{Q}_{i,\theta_k}^+(\rho)] \ ; \\
    f_{i,-1} := \tr[O \mathcal{Q}_{i,\theta_k}^-(\rho)] \ ; \\
    \end{aligned}
\end{equation}
Then the equation~\eqref{eq:PSR2gates} can be rewritten as
\begin{equation}
    \pdv{\theta_k} f= \frac{f_{0,+1} - f_{0,-1}}{2} + \frac{f_{1,+1} - f_{1,-1}}{2} \ .
\end{equation}

In general, in case there are $n$ Pauli rotation gates that depend on $\theta_k$, the derivative of the variational circuit can be evaluated by
\begin{equation}
    \pdv{\theta_k} f= \sum_{i\in [0,n-1],s\in\qty{-1,+1}} \frac{s\cdot f_{i,s}}{2} \ .
\end{equation}

To see how this value is measured on quantum devices, we need to look into the measurement of the expectation value of the observable $O$.
Let $\mathcal{M}_f = \qty{(\Pi_x, \lambda_x)}$ be a \emph{measurement protocol} corresponding to the function $f$, such that the set $\qty{\Pi_x}$ is a set of operators that forms a POVM measurement, and
\begin{equation}
    \sum_x \lambda_x \tr[\Pi_x \rho] = \tr[O \rho] = f \ ,
\end{equation}
for all quantum states $\rho$.
We may define a \emph{random variable} $Z \in \mathbb{R}$ with the probability distribution $\prob(Z=\lambda\vert \rho) := \sum_{x: \lambda_x = \lambda} \tr[\Pi_x \rho]$.
Then $f$ can be expressed as the expectation value of the random variable that $f = \langle Z\rangle$.
Similarly, each $f_{i,s}$ can be expressed as the expectation value of the random variable $Z_{i,s}$, which is the random variable with respect to the distribution
\begin{equation}
    \prob(Z_{i,s} = \lambda \vert s, i, \rho) := \sum_{x: \lambda_x = \lambda} \tr[\Pi_x \mathcal{Q}_{i,\theta_k}^s(\rho)] \ .
\end{equation}
and
\begin{equation}
    f_{i,s} = \langle Z_{i,s} \rangle = \sum_{\lambda} \lambda \prob(\lambda \vert s,i,\rho) \ .
\end{equation}
Note that this distribution can be evaluated by the same protocol $\mathcal{M}_f$ with the quantum state $\mathcal{Q}_{i,\theta_k}^s(\rho)$, and the number of gates in $\mathcal{Q}_{i,\theta_k}^s$ is same as $\mathcal{Q}$.

In summary,
\begin{equation}
    \begin{aligned}
    \pdv{\theta_k} f & = \sum_{i\in [0,n-1],s\in\qty{-1,+1}} \frac{s\cdot \langle Z_{i,s} \rangle}{2}
    &= \sum_{\lambda, s, i} \frac{\lambda \, s}{2} \prob(\lambda \vert s,i,\rho) \\
    \end{aligned}
\end{equation}
If we i.i.d. sample $i$ and $s$ from the even distribution of the integers $[0,n-1]$ and the set $\qty{-1,+1}$ respectively.
\begin{equation}
    \begin{aligned}
    & \sum_{\lambda, s, i} \lambda \, s \, \prob(\lambda \vert s,i,\rho)\prob(s)\prob(i)
    &= \sum_{\lambda, s, i} \frac{\lambda \, s}{2 n} \prob(\lambda \vert s,i,\rho)
    &= \frac{1}{n} \pdv{\theta_k} f
    \end{aligned}
\end{equation}
It means that, the evaluation of $\pdv{\theta_k} f$ is the mean of a stochastic process
\begin{equation}
    (n s \hat{Z}) \sim \prob(\hat{Z} = \lambda, s, i \vert \rho) \ ,
\end{equation}
where it takes value $n s \lambda$ with probability
\begin{equation}
\begin{aligned}
    \prob(\hat{Z} = \lambda, s, i \vert \rho)  & = \prob(\lambda \vert s,i,\rho)\prob(s)\prob(i) \\
    & = \sum_{x: \lambda_x = \lambda} \prob(x \vert s, i, \rho) \prob(s) \prob(i) \ ,
\end{aligned}
\end{equation}
As a result, we may evaluate the gradient $\pdv{\theta_k} f$ by sampling from the distribution $\prob(\hat{Z} = \lambda, s, i \vert \rho)$.
The sampling procedure is summarized as follows.

\begin{algorithm}[tp]
\caption{Parameter Shift Rule (Repeated Parameters)}
\begin{algorithmic}[1]
    \State \textbf{Input} The protocol $\mathcal{M}_f$ that defines $f$, number of samples $m$;
    \State $n \gets$ the number of occurrences $\theta_k$ in the circuit;
    \State \textbf{Init} $D \gets 0$;
    \MRepeat{$m$ times}
        \State Uniformly sample a number $i \in [0,n-1]$;
        \State Uniformly sample a number $s \in \qty{-1,+1}$;
        \State Sample $x \sim \prob(x\vert s,i,\rho)$ once;
        \State Cumulate $D \gets D + s\lambda_x$;
    \EndRepeat

    \State \textbf{return} The gradient $\pdv{\theta_k}f = \frac{n}{m} D$;
\end{algorithmic}
\end{algorithm}

In the special case where the observable $O$ is diagonal in the computational basis, this method reduces to the algorithm used in the main text.
To see this, we first observe that the value of $\tr[O \rho]$ can be evaluated by the computational basis measurements $\Pi_{x_{-M:L}} = \ketbra{x_{-M:L}}$ along.
Let $Q(x_{-M:L} \vert s,i,\rho) = \tr[\Pi_{x_{-M:L}} \mathcal{Q}_{i,\theta_k}^s(\rho)]$ be the probability distribution of measuring the output $x_{-M:L}$ from the quantum state $\mathcal{Q}_{i,\theta_k}^s(\rho)$ in the computational basis.
Then the random variable $Z_{i,s}$ takes value $\lambda(x_{-M:L})$ with probability $Q(x_{-M:L} \vert s,i,\rho)$.

\begin{figure}
    \centering
    \includegraphics[width=0.48\textwidth]{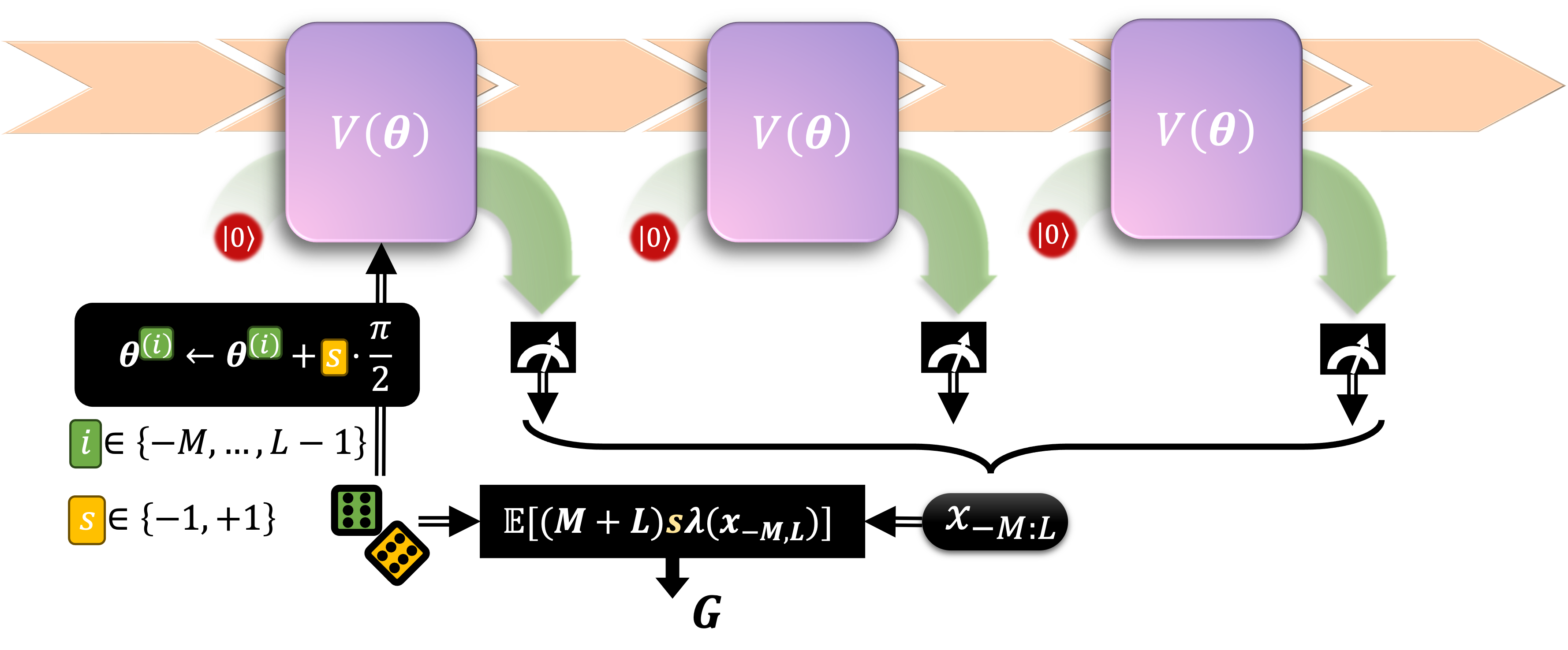}
    \caption[The parameter shift rule for the recurrent quantum circuit.]
    {Given the weights $\lambda(x_{-M:L})$ that define a function $f$, the gradient can be evaluated using the procedure in the algorithm.
    At each iteration, a random step $i$ is selected, and the parameter $\theta$ in the $i$-th unitary $V(\theta)$ is shifted by $s\cdot \frac{\pi}{2}$, where $s$ is randomly chosen from $\qty{-1,+1}$.
    The quantum sample $\ket{\Phi}_{\past{\chi}_M, \future{\chi}_L,S}^{s,i}(\theta)$ is then measured in the computational basis to obtain the output string $x_{-M:L}$.
    The mean value $G = \mathbb{E}[(M+L) s \lambda(x_{-M:L})]$
    is then an unbiased estimator of the gradient $\pdv{\theta} f(\ket{\Phi}_{\past{\chi}_M, \future{\chi}_L,S}(\theta))$.
    }
    \label{fig:ParameterShift}
\end{figure}

\section{Evaluating Cost Functions}

This section describes how to evaluate the co-emission distortion rate $D_A^L$ and its derivatives. Recall that the co-emission distortion rate is defined as
\begin{equation}
    D_A^L(\mathcal{Q})
    = -\frac{1}{L} \sum_{\past{x}_M} P(\past{x}_M)\log \frac{\vb{p}^{\past{x}_M}_L \cdot \vb{q}^{\past{x}_M}_L}
    {\norm{\vb{p}^{\past{x}_M}_L}\norm{\vb{q}^{\past{x}_M}_L}}.
\end{equation}
Let
\begin{equation}
S_{\past{x}_M} = \frac{\vb{p}^{\past{x}_M}_L \cdot \vb{q}^{\past{x}_M}_L} {\norm{\vb{p}^{\past{x}_M}_L}\norm{\vb{q}^{\past{x}_M}_L}}
= \frac{C}{\sqrt{AB}}
\end{equation}
be the \emph{cosine similarity} between the two distributions $\vb{p}^{\past{x}_M}_L$ and $\vb{q}^{\past{x}_M}_L$, and $A=\norm{\vb{p}_L^{\past{x}_M}}^2, B=\norm{\vb{q}_L^{\past{x}_M}}^2$ and $C= \vb{p}_L^{\past{x}_M} \cdot \vb{q}_L^{\past{x}_M}$.
The evaluation of $D_A^L$ then reduces to the evaluation of $S_{\past{x}_M}$ for all $\past{x}_M$.
For simplicity, we may omit the notation of the past in the following discussion, and denote $S = S_{\past{x}_M}$.
The evaluation of $S$ is then reduced to the evaluation of three quantities $A, B$, and $C$ separately.

According to the chain rule, the derivative of $D_A^L$ with respect to $\theta_k$ is the average of the derivatives of $\log S$ for all $\past{x}_M$.
\begin{equation}
    \pdv{\theta_k} \log S =
    \frac{1}{C}\pdv{C}{\theta_k} - \frac{1}{2B} \pdv{B}{\theta_k} \ .
\end{equation}
Note that, as $A$ is independent of $\theta_k$, it is not included in the derivative.
Here we show that the derivatives of $B$ and $C$ can be evaluated efficiently with the help of the parameter shift rule.

By taking the $\lambda(x_{-M:L}) = \vb{q}_L^{\past{x}_M}(x_{0:L-1})$ in the measurement protocol defined in the main text, the derivative of $B$ with respect to $\theta_k$ can be evaluated using the parameter shift rule for recurrent quantum circuits directly.

On the other hand, the quantity $C$ is the inner product of two copies of $\vb{q}_L^{\past{x}_M}$.
It worth to note that $\vb*{q}_L^{\past{x}_M} \cdot \vb*{q}_L^{\past{x}_M} = \prob(\one_{x=x'})$ is the probability of a pair of output strings $(x,x')$ been equal, when drawn from the joint distribution $\vb{q}_L^{\past{x}_M} \times \vb{q}_L^{\past{x}_M}$.
Given two independent processes, each produces a sequence of samples according to $\vb*{q}_L^{\past{x}_M}$, the inner product can be evaluated by counting the portion of events in which the two processes produce the same output.
From this perspective, the quantity $C$ is the expectation value of the observable $O=\sum_x \ketbra{x,x}$ defined on two copies of $\mathcal{Q}$.
Then, by taking $\lambda(x_{-M:L}, x'_{-M:L}) = \one_{x=x'}$, the derivative of $C$ with respect to $\theta_k$ can also be evaluated through extending the parameter shift rule for two copies of unitary~\cite{wang2024} to recurrent quantum circuits.

As all parameterized gates in our variational quantum circuits are Pauli rotations, where $U(\theta_k) = e^{-i\theta_k/2 \sigma_k}$ for some Pauli operator $\sigma_k$, we may apply the parameter shift rule to obtain the derivative of $B$ and $C$ with respect to each of the parameters $\theta_k$.
With the derivatives of $D_A$, we can use the ADAM method to update the parameters of the quantum model that minimize $D_A$.
This training process is summarized in Algorithm~\ref{alg:Variational}.

\begin{algorithm}[tp]
\caption{Training quantum model}\label{alg:Variational}
\begin{algorithmic}[1]
    \State Evaluate $B_{x_{0:k}}$ for all $x_{0:k}$;

    \State Initialize the parameterized quantum model $U(\theta)$;
    \Repeat
        \State Evaluate $A_{x_{0:k}}$ and  $C_{x_{0:k}}$;
        \For{every parameter $\theta_i$}
            \State Evaluate $\pdv{\theta_i} A_{x_{0:k}}$;
            \State Evaluate $\pdv{\theta_i} C_{x_{0:k}}$;
            \State Compute $D_A$ and $\pdv{\theta_i} D_A$ from $A,B,C, \pdv{\theta_i}A$ and $\pdv{\theta_i}B$;
            \State Update $\theta_i$ with ADAM method;
        \EndFor
    \Until {$D_A$ changes less than $\epsilon$ between two epochs;}
    \State \textbf{return} all $\theta_i$;
\end{algorithmic}
\end{algorithm}

\section{Error Analysis}

According to Chebyshev's inequality, for a random variable $X$ with mean
$\mu_X$ and variance $\sigma_X$, the distribution
\begin{equation}
    P(\abs{X-\mu_X}\geq \epsilon \mu_X) \leq \frac{\sigma_X^2}{\epsilon^2\mu_X^2} \ .
\end{equation}
That is, the multiplicative error $\epsilon$ is bounded by the normalized
variance $\frac{\sigma_X^2}{\mu_X^2}$.

Note all of the three quantities $A,B$, and $C$ are defined in the way that $A = \one(P=P^\prime)$, $B = \one(Q=Q^\prime)$, $C = \one(P=Q)$, where $\one$ stands for the indicator function.
As the result, the variance $\sigma_A^2 = \expval{A^2} - \expval{A}^2 = \mu_A - \mu_A^2$, as the indicator functions are idempotent that $\one^2 = \one$.
The same identity can be applied to $B$ and $C$.
In summary, when $X$ takes each of the random variables $A,B,C$,
\begin{equation}
    \frac{\sigma_X^2}{\mu_X^2} = \frac{1-\mu_X}{\mu_X} \ .
\end{equation}
This is small only when $\mu_X$ is non-vanishing.

In order to cope with the multiplicative error, we introduce the logarithm function to our cost.
For an arbitrary function $f(X)$, where $X$ is a random variable, we have
\begin{equation}
    \mu_{f(X)} = f(\mu_{X}) + {\sigma_X^2} \frac{f^{''}(\mu_X)}{2} + \mathcal{O}(\norm{X-\mu_X}^3) \ ,
\end{equation}
Take $f$ to be the logarithm function, it gives
\begin{equation}
    \mu_{\log X} \approx \log \mu_X - \frac{1}{2}\frac{\sigma_X^2}{\mu_X^2} \ .
\end{equation}
Similarly,
\begin{equation}
    \mu_{\log^2 X} \approx \log^2 \mu_X - (1-\log \mu_X)\frac{\sigma_X^2}{\mu_X^2} \ .
\end{equation}
Let $Y:= \log X$ be the random variable of the logarithm of $X$, then
\begin{equation}
    \begin{aligned}
        \frac{\sigma_Y^2}{\mu_Y^2} &= \frac{\mu_{Y^2}}{\mu_Y^2} - 1 \\
        &\approx \frac{\log^2 \mu_X + \log \mu_X \frac{\sigma_X^2}{\mu_X^2} - \frac{\sigma_X^2}{\mu_X^2}}%
        {\log^2 \mu_X + \log \mu_X \frac{\sigma_X^2}{\mu_X^2} - \frac{\sigma_X^4}{4\mu_X^4}} - 1 \\
        &= \frac{\frac{\sigma_X^4}{4\mu_X^4}- \frac{\sigma_X^2}{\mu_X^2}}%
        {\log^2 \mu_X + \log \mu_X \frac{\sigma_X^2}{\mu_X^2} - \frac{\sigma_X^4}{4\mu_X^4}} \\
        &\leq \frac{\frac{\sigma_X^4}{4\mu_X^4}}%
        {\log^2 \mu_X + \log \mu_X \frac{\sigma_X^2}{\mu_X^2} - \frac{\sigma_X^4}{4\mu_X^4}} \\
        &= \qty(\frac{\frac{\sigma_X^2}{2\mu_X^2}}{\log \mu_X - \frac{\sigma_X^2}{2\mu_X^2}})^2 \\
    \end{aligned}
\end{equation}

Now we may substitute $A,B,C$ to $X$, that
\begin{equation}
    \frac{\sigma_Y^2}{\mu_Y^2} \leq \qty(\frac{\frac{1-\mu_X}{\mu_X}}{\log \mu_X - \frac{1-\mu_X}{\mu_X}})^2 
    = \qty(\frac{1-\mu_X}{1-\mu_X - \mu_X \log \mu_X})^2 \ .
\end{equation}
Given $A,B,C$ are all inner product of probability distributions, $\mu_A, \mu_B, \mu_C \in [0,1]$.
Also, the entropy function $-x\log x \in [0,e^{-1}]$ in the range $[0,1]$.
\begin{equation}
    \frac{\sigma_Y^2}{\mu_Y^2} \leq 1 \ .
\end{equation}
Recall that
\begin{equation}
    D_A = \frac{\log A + \log B}{2} - \log C \ .
\end{equation}
According to the error analysis, each of the three terms can be evaluated with the multiplicative error $\epsilon$ with $O\qty(\frac{1}{\epsilon^2})$ samples.
Similarly, as their gradients can be computed with the parameter shift rule, they can be evaluated with the same accuracy.

\section{Ansatzes for Benchmarking}
\label{sec:ansatzes}

To apply our training method, we need to choose an ansatz for the parameterized quantum circuit.
For comparison, in our numerical simulation, we used a fully connected quantum model, which is a hardware-efficient quantum circuit with $4$ layers of full-connected gates as shown in figure~\ref{fig:Full_model}.
Each layer consists of $3$ local rotation gates on each qubit, and a controlled $R_x$ gate between each pair of neighbor qubits.
As the maximum number of qubits in our simulation is $9$, the circuit contains $9\times 4 \times 3 = 108$ rotation gates, and $32$ controlled $R_x$ gates.
That is, the circuit contains $140$ parameters in total.

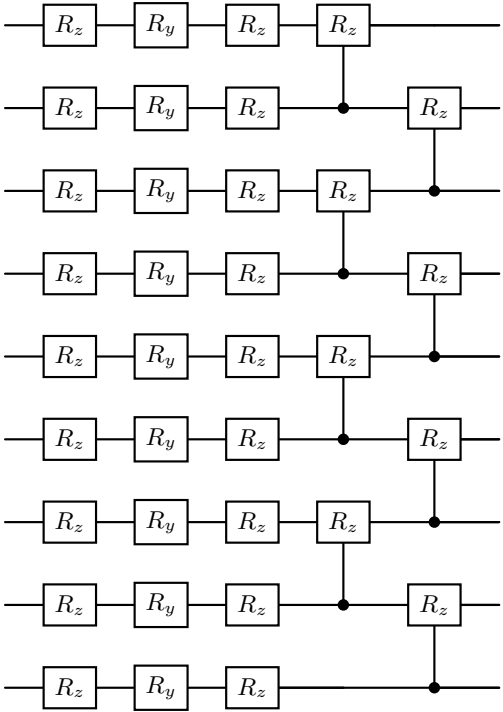
\begin{figure}[htp]
\centering
\resizebox{.8\linewidth}{!}{
\begin{quantikz}
    \qw & \gate{R_z} & \gate{R_y}  & \gate{R_z} & \gate{R_z} & \qw        & \qw  \\
    \qw & \gate{R_z} & \gate{R_y}  & \gate{R_z} & \ctrl{-1}  & \gate{R_z} & \qw  \\
    \qw & \gate{R_z} & \gate{R_y}  & \gate{R_z} & \gate{R_z} &  \ctrl{-1} & \qw  \\
    \qw & \gate{R_z} & \gate{R_y}  & \gate{R_z} & \ctrl{-1}  & \gate{R_z} & \qw  \\
    \qw & \gate{R_z} & \gate{R_y}  & \gate{R_z} & \gate{R_z} &  \ctrl{-1} & \qw  \\
    \qw & \gate{R_z} & \gate{R_y}  & \gate{R_z} & \ctrl{-1}  & \gate{R_z} & \qw  \\
    \qw & \gate{R_z} & \gate{R_y}  & \gate{R_z} & \gate{R_z} &  \ctrl{-1} & \qw  \\
    \qw & \gate{R_z} & \gate{R_y}  & \gate{R_z} & \ctrl{-1}  & \gate{R_z} & \qw  \\
    \qw & \gate{R_z} & \gate{R_y}  & \gate{R_z} & \qw        &  \ctrl{-1} & \qw  \\
\end{quantikz}
}
\caption{\emph{One layer of the fully connected quantum model with 9 qubits.} The model consists of $4$ layers of
such configurations.}
\label{fig:Full_model}
\end{figure}

A similar 3-qubit hardware-efficient quantum circuit is used to construct the 2-qubit memory recurrent quantum circuit for comparison, where the first $2$-qubits are the memory qubits, and the last qubit is the output qubit.
This circuit contains $3$ layers, which counts $3 \times 3 \times 3 = 27$ rotation gates and $6$ controlled $R_x$ gates.
For the 1-qubit memory recurrent quantum circuit, as discussed in the following section, we used a simplified universal quantum circuit containing only $8$ independent parameters.

The number of parameters in each model ($8$ vs. $27$ vs. $140$) significantly affects the landscape of the gradients.
For the $5$-state uniform renewal process, we randomly sampled $100$ initial parameters for each model, and plotted the distribution of the magnitude of the gradient of a random parameter in figure~\ref{fig:gradient}.
As shown in the figure, the gradients of the non-recurrent model (full) are more concentrated at small values, which makes the model harder to train.
In general, a model with a small number of parameters has a higher chance of finding a significant optimization direction.

\begin{figure}[htp]
    \centering
    \includegraphics[width=.8\linewidth]{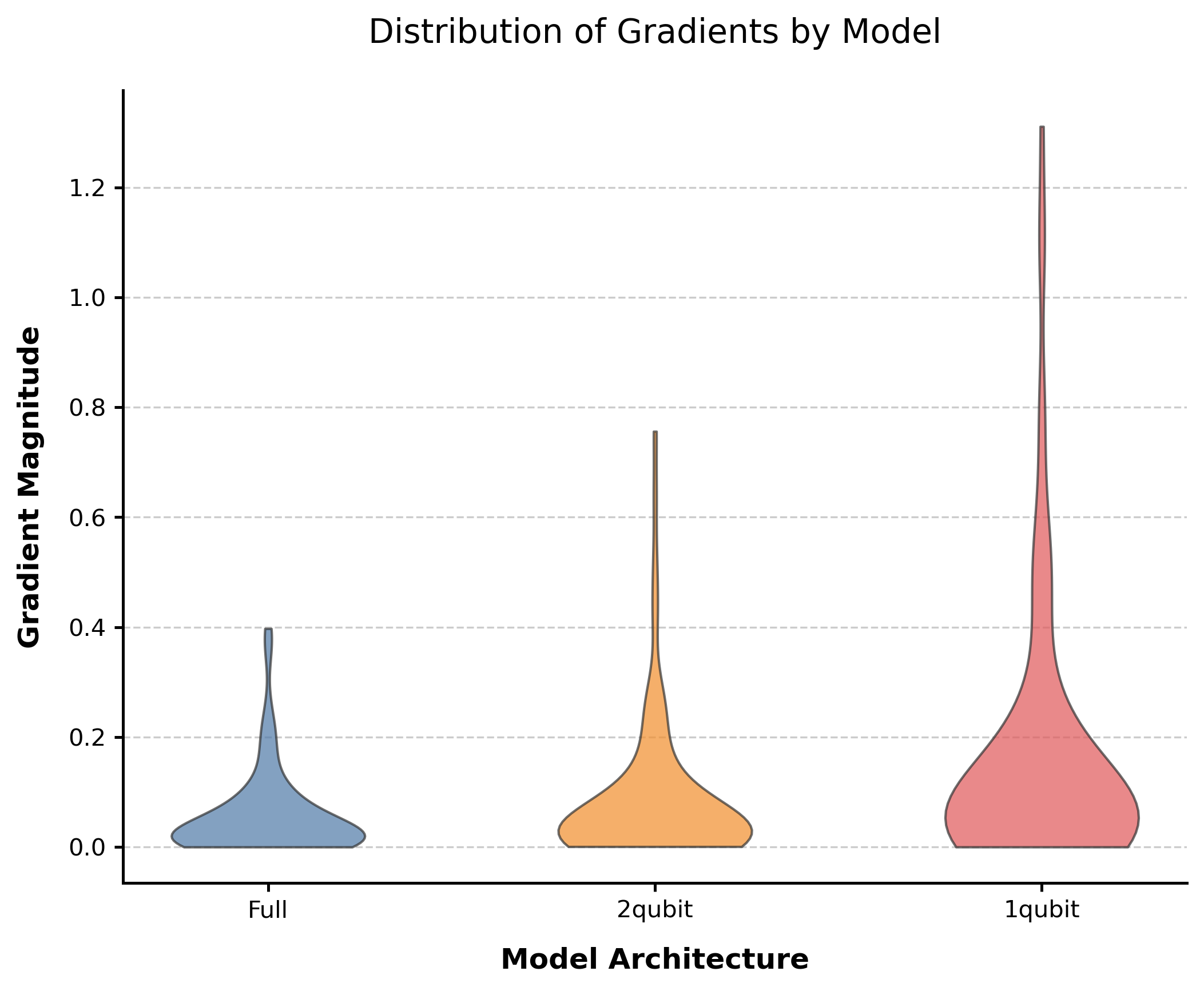}
   \caption{\label{fig:gradient}\textbf{Landscape of the gradients for each model}. Here we randomly sample $100$ initial parameters from each model and plot the magnitude of the gradient of a random parameter. As shown in the figure, the gradients of the non-recurrent model (full) are more concentrated at small values, which makes the model harder to train. In general, a model with a small number of parameters has a higher chance of finding a significant optimization direction.
   }
\end{figure}

\subsection{Simplified Universal Quantum Circuits for Stochastic Process}

In our stochastic model, the initial state of our output state is fixed as $\ket{0}$, and the output is always measured on the computational basis.
Given the constraints, there are many global unitaries that are equivalent in terms of the output distribution.
As a result, we may simplify the universal quantum circuits by eliminating the redundant degrees of freedom induced by such equivalence.

Here, we start from the simplest case, where the output system is a single qubit.

\begin{prop}
Given an $n$-qubit unitary operator $U$, and one of its qubits is measured in the computational basis.
Let $K_0$, $K_1$ be the Kraus operators corresponding to the measurement outcome $0$ and $1$, respectively.
Then there exists two ($n-1$)-qubit unitary operators $U_0$ and $U_1$; and a diagonal real positive matrix $C$, such that the Kraus operators
\begin{equation}
    K_0 = U_0 C; \quad K_1 = U_1 S \ ,
\end{equation}
which simulates the same stochastic process.
Here $S := \sqrt{\mathbb{I}-C^2}$.
These Kraus operators are unique up to a phase gate and a permutation of the computational basis.
\end{prop}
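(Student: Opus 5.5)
The plan is to reduce the statement to a polar/singular-value decomposition of the two Kraus operators, using the fact that the output qubit always enters in $\ket{0}$. First I would fix notation. Write the $n$-qubit unitary as $U$ acting on memory $S$ ($n-1$ qubits) and the output qubit $\chi$ prepared in $\ket{0}$. The effective Kraus operators on $S$ are then
\begin{equation}
K_x = (\bra{x}_\chi \otimes \mathbb{I}_S)\, U\, (\ket{0}_\chi\otimes \mathbb{I}_S), \qquad x\in\{0,1\}.
\end{equation}
These are the two blocks of the first block-column of $U$. Unitarity of $U$ gives $K_0^\dagger K_0 + K_1^\dagger K_1 = \mathbb{I}$. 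The statistics of the process (all $Q(x_{-M:L})$) depend on $U$ only through $\{K_x\}$ (up to how memory states are labelled), so it suffices to put $K_0,K_1$ in canonical form.

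\textbf{Existence.} Since $K_0^\dagger K_0$ is positive semidefinite with eigenvalues in $[0,1]$, diagonalize it as $K_0^\dagger K_0 = W C^2 W^\dagger$ with $W$ unitary and $C$ diagonal, real and nonnegative. The completeness relation then forces $K_1^\dagger K_1 = W(\mathbb{I}-C^2)W^\dagger = W S^2 W^\dagger$. The polar decomposition gives $K_0 = V_0 \sqrt{K_0^\dagger K_0} = V_0 W C W^\dagger$, and similarly $K_1 = V_1 W S W^\dagger$, with $V_0,V_1$ unitary. Unitaries exist even when $C$ or $S$ is singular, because the matrices are square and any partial isometry extends.

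Next I absorb the basis change. Replace the memory basis by $W$, i.e.\ conjugate the whole recurrent circuit by $W$. This is legitimate because the memory is internal: the initial memory state and the final trace are transformed accordingly, and the output distribution is unchanged. The transformed Kraus operators are
\begin{equation}
W^\dagger K_x W = (W^\dagger V_x W)\,C \ \text{or}\ (\cdot)\,S,
\end{equation}
so setting $U_0 = W^\dagger V_0 W$ and $U_1 = W^\dagger V_1 W$ gives $K_0 = U_0 C$ and $K_1 = U_1 S$. I should state explicitly the sense in which this "simulates the same stochastic process": equivalence of Kraus sets up to a unitary change of memory basis, $K_x \mapsto W^\dagger K_x W$. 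Under that change every sequence probability $\tr[K_{x_L}\cdots K_{x_1}\rho K_{x_1}^\dagger\cdots]$ is preserved when $\rho\mapsto W^\dagger\rho W$.

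\textbf{Uniqueness.} This is the step I expect to be the main obstacle, since it is only "up to" certain freedoms and degeneracies complicate it. The plan has three parts.
\begin{itemize}
\item $C^2$ is determined as the spectrum of $K_0^\dagger K_0$, so $C$ is fixed up to a permutation of diagonal entries. That permutation is a permutation of the computational basis.
\item When the entries of $C$ are distinct and nonzero, the eigenbasis $W$ is fixed up to diagonal phases. A diagonal phase gate $D$ changes the data by $U_x \mapsto D^\dagger U_x D$, which leaves $C$ and $S$ unchanged because they commute with $D$.
\item $U_0$ is then unique, since $C$ is invertible. $U_1$ is unique when $S$ is invertible.
\end{itemize}
In degenerate cases (repeated eigenvalues, or zeros in $C$ or $S$) the freedom enlarges to a block unitary commuting with $C$, together with the non-uniqueness of the polar unitary on the kernel. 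I would either state the uniqueness claim for the generic case or note that these extra freedoms act on measure-zero sets and do not affect the parameter count.

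Finally, a parameter count would connect the proposition to the claimed $8$ parameters for $n=2$:
\begin{itemize}
\item $C$ is diagonal, giving $1$ angle.
\item Each single-qubit unitary $U_0,U_1$ carries $3$ parameters modulo global phase.
\item Removing the relative phase freedom leaves $8$ in total, consistent with the main text.
\end{itemize}
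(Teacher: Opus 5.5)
Your argument is correct and is essentially the paper's. The paper invokes the CS decomposition of $U$ and reads off the first block column $\tilde U_0 C W_0^\dagger,\ \tilde U_1 S W_0^\dagger$; your diagonalization of $K_0^\dagger K_0$ followed by the two polar decompositions produces exactly this by hand, and both of you then absorb the common right factor by a change of memory basis, with uniqueness resting on uniqueness of the singular value decomposition when $C$ is non-degenerate. The only slip is in your closing parameter-count aside, which the proposition does not need: for a one-qubit memory $C=\mathrm{diag}(\cos(\theta_0/2),\cos(\theta_1/2))$ carries two angles, so the count is $2+3+3=8$, not $1+3+3$ adjusted by a phase.
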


\begin{proof}
At the beginning of each step, we initialize the output system $O$ to $\ket{0}$, and measure it in the computational basis.
The Kraus operators $K_s$ generated by $U$ are
\begin{equation}
    \tilde{K}_s = (\mathbb{I}_M\otimes\bra{0}_O) U (\mathbb{I}_M\otimes\ket{s}_O)
\end{equation}

The projector $(\mathbb{I}_M\otimes\ketbra{s}{s}_O)$ divides the unitary $U$ into two subspaces.
We may divide the matrix of $U$ into 4 parts according to the subspaces:
\begin{equation}
    \mqty(A_{00}&A_{0,1}\\A_{1,0}&A_{1,1}) \ ,
\end{equation}
where $A_{i,j}= (\mathbb{I}_M\otimes\bra{i}_O) U (\mathbb{I}_M\otimes\ket{j}_O)$ is an operator on memory system $M$.
According to the Cosine-Sine (CS) decomposition of unitary operators, $U$ can be represented as
\begin{equation}
    \mqty(\tilde{U}_0 C W_0^\dagger & - \tilde{U}_0 S W_1^\dagger\\
    \tilde{U}_1 S W_0^\dagger & \tilde{U}_1 C W_1^\dagger) \ ,
\end{equation}
where $\tilde{U}_i, W_j$ are unitary operators, and $C^2+S^2 = \mathbb{I}$ are real positive diagonal matrices.
Note that, in each block, this representation is the singular value decomposition (up to a sign) of $A_{i,j}$.
By definition, the Kraus operators are
\begin{equation}
    \tilde{K}_0 = \tilde{U_0} C W_0^\dagger;
    \quad \tilde{K}_1 = \tilde{U_1} S W_0^\dagger \ .
\end{equation}

Since there are no restrictions on the initial state of the memory, we may apply a basis transformation to the initial state of the memory system without affecting the classical output.
Therefore, the canonical form
\begin{equation}
    K_0 = W\tilde{U}_0 C := U_0 C; \quad K_1 = W\tilde{U}_1 S := U_1 S
\end{equation}
simulate the same quantum stochastic process.
The uniqueness (assuming $C$ is not degenerated) of the canonical form can be achieved by restricting the diagonal terms of $C$ be in descending order, and the elements in the first row of $U_0$ are all real (based on uniqueness of singular value decomposition).
\end{proof}

\begin{prop}
All quantum stochastic processes generated by a two-qubit unitary can be
simulated by the circuit~\ref{fig:UniCirc_stoch}.
\end{prop}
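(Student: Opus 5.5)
We reduce the statement to the canonical form of the previous proposition and then show that the circuit in Fig.~\ref{fig:UniCirc_stoch} realises every instance of that form with $8$ parameters.

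\textbf{Step 1: canonical Kraus pair.} Specialise the previous proposition to $n=2$, so the memory $M$ is a single qubit. Every two-qubit unitary $U$, with the output initialised to $\ket{0}$ and measured in the computational basis, produces the same stochastic process as a pair
\begin{equation}
K_0 = U_0 C, \qquad K_1 = U_1 S,
\end{equation}
where $U_0, U_1 \in U(2)$, $C=\mathrm{diag}(\cos\alpha,\cos\beta)$ and $S=\mathrm{diag}(\sin\alpha,\sin\beta)$ with $\alpha,\beta\in[0,\pi/2]$. It therefore suffices to realise every such pair up to the residual freedom the proposition allows.

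\textbf{Step 2: the controlled-rotation core.} The map $\ket{m}\ket{0}\mapsto C\ket{m}\ket{0}+S\ket{m}\ket{1}$ is a uniformly controlled $R_y$ on the output qubit. The memory value $m$ selects the angle $2\alpha$ or $2\beta$. Compiled in the standard way, this becomes $R_y(\phi_1)$, a CNOT, $R_y(\phi_2)$ and a CNOT, with $\phi_{1,2}=\alpha\pm\beta$, which costs $2$ parameters.

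\textbf{Step 3: the two branch unitaries.} Next we apply $U_0$ when the output is $0$ and $U_1$ when it is $1$, as a uniformly controlled single-qubit unitary on the memory. Write this as $(U_0\oplus U_1) = (I\oplus U_1U_0^\dagger)(U_0\oplus U_0)$.

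\begin{itemize}
\item The factor $U_0\oplus U_0$ is a single-qubit gate on the memory. Global phase is irrelevant, and we may absorb more: a diagonal phase on the memory after $U_0$ commutes with the subsequent dynamics only through a basis change. The proposition's uniqueness statement (first row of $U_0$ real, i.e.\ freedom of a phase gate and basis relabelling) removes parameters here.
\item The factor $I\oplus U_1U_0^\dagger$ is a controlled-$SU(2)$ plus a relative phase. Decompose $U_1U_0^\dagger=e^{i\gamma}R_z(a)R_y(b)R_z(c)$. The phase $e^{i\gamma}$ gives a $R_z$ on the output qubit, which is harmless because the output is measured in the computational basis. This removes one parameter.
\end{itemize}

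\textbf{Step 4: parameter count.} Counting gives $2$ (core) $+\,3$ (controlled $SU(2)$) $+\,3$ (memory single-qubit unitary) $=8$. This matches the figure, once we check that the drawn gates coincide with this compilation.

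\textbf{Main obstacle.} The delicate step is the redundancy bookkeeping: justifying that the phases removed in Step 3 really leave the process invariant. The outcome statistics are $\Pr(x_{0:L})=\norm{K_{x_L}\cdots K_{x_0}\ket{\psi}}^2$, and we must confirm these are unchanged across the whole repeated sequence, not just in one step. My plan is to argue this by conjugation: a memory unitary $W$ satisfies $K_s\mapsto W K_s W^\dagger$ with $\ket{\psi}\mapsto W\ket{\psi}$, which leaves all probabilities unchanged. Phases attached to the output qubit do not affect its computational-basis outcome, and it is reset to $\ket{0}$ at every step. I would then check that the circuit of Fig.~\ref{fig:UniCirc_stoch} matches the compiled form gate by gate.
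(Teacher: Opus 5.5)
Up to Step~3 your argument is the paper's. You reduce to the canonical pair $K_0=U_0C$, $K_1=U_1S$, realise $C,S$ by a memory-controlled $R_y$ on the output, and realise $U_s$ by a memory unitary conditioned on the output value $s$. Discarding the relative phase of the controlled block as a phase gate on the measured output is also correct.

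The gap is the one step that is specific to \fref{fig:UniCirc_stoch}. The drawn circuit is not your compilation, so the gate-by-gate check you defer would fail. Its core is a memory-controlled $R_y(\theta_1)$ followed by $R_y(\theta_2)$ on the output, with no CNOTs. This still gives independent angles $\theta_2$ and $\theta_1+\theta_2$, so that part is harmless. On the memory, however, the figure has an $R_y(a)$, then only \emph{two} output-controlled rotations $R_y(b)$ and $R_z(c)$, and then an uncontrolled $V=R_zR_yR_z$. The circuit therefore realises
\[
U_0=V R_y(a),\qquad U_1=V R_z(c)R_y(b)R_y(a).
\]
There is no three-parameter controlled $SU(2)$, and the uncontrolled block comes after the controlled one. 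So your $3+3$ count does not establish anything about this circuit.

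The missing observation is that $U_0^\dagger U_1=R_y(-a)R_z(c)R_y(a+b)$ is a $YZY$ Euler form, and hence ranges over all of $SU(2)$ as $(a,b,c)$ vary. Choosing $V=U_0R_y(-a)$ then fixes $U_0$ arbitrarily. Hence every pair $(U_0,U_1)$, modulo the individual phases of $K_0$ and $K_1$, is reached; this is the role of the uncontrolled $R_y$ on the memory in the second column. The paper's proof only asserts this step (``with the freedom of choosing $U_0$, $U_1$''). Adding the Euler-angle argument would make your proof more complete than the paper's.

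Separately, drop the remark that the uniqueness clause of the previous proposition removes parameters from $U_0$. It is not used in your count, since you give $U_0$ three parameters. As phrased it is also inaccurate. Apart from the phases of $K_0$ and $K_1$, the residual freedom is a simultaneous conjugation $U_s\mapsto DU_sD^\dagger$ of \emph{both} branches by a diagonal unitary $D$, which commutes with $C$ and $S$; it is not a reduction of $U_0$ alone. The circuit does not exploit this freedom, and universality does not need it.
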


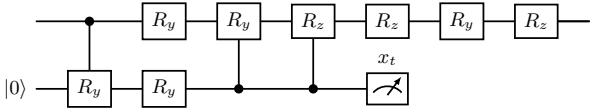
\begin{figure}[htp]
\centering
\resizebox{.45\textwidth}{!}{
\begin{quantikz}
                        &\ctrl{1} & \gate{R_y}        & \gate{R_y} & \gate{R_z}   & \gate{R_z} & \gate{R_y} & \gate{R_z} & \qw \\
    \lstick{$\ket{0}$}  & \gate{R_y}  & \gate{R_y} & \ctrl{-1}        & \ctrl{-1} & \meter{x_t} \\
\end{quantikz}
}
\caption{\label{fig:UniCirc_stoch} \emph{Universal circuit for $2$-output quantum predictive model}.
It contains $8$ parameterized gates.}
\end{figure}

\begin{proof}
This circuit is constructed based on the canonical form of the Kraus operators, which follows~\cite{mottonen2004}.
The circuit of the universal $2\times2$ quantum stochastic process is a controlled $R_Y$ followed by the controlled unitary.
The second part is $U_s \otimes \ketbra{s}{s}_O$ as in the experimental model.
The first part of the circuit is
\begin{equation}
    (\ketbra{0}{0}\otimes R_Y(\theta_0) + \ketbra{1}{1}\otimes R_Y(\theta_0))
    (\mathbb{I}_M\otimes \ketbra{0}{0}_O) \ .
\end{equation}
By taking the product of the two operators, this is
\begin{equation}
    K_s = \mel{s}{R_Y(\theta_0)}{0} U_s \ketbra{0}{0} +
    \mel{s}{R_Y(\theta_1)}{0} U_s \ketbra{1}{1}
\end{equation}

By the definition of $R_Y$, $\mel{0}{R_Y(\theta_0)}{0} = \cos(\theta_0/2)$ and
$\mel{1}{R_Y(\theta_0)}{0} = \sin(\theta_0/2)$.
Therefore
\begin{equation}
    \begin{aligned}
    K_0 = U_0 \mqty(\cos(\theta_0/2)&0\\0&\cos(\theta_1/2)) \\
    K_1 = U_1 \mqty(\sin(\theta_0/2)&0\\0&\sin(\theta_1/2)) \ .
    \end{aligned}
\end{equation}

Since $\theta_0$ and $\theta_1$ are independent, by choosing appropriate $\theta_i$,
$\mqty(\cos(\theta_0/2)&0\\0&\cos(\theta_1/2))$ may simulate all matrix $C$ that
satisfies the CS decomposition.
With the freedom of choosing $U_0$, $U_1$, for every Kraus operator in canonical
form, there exists a circuit in the form of \fref{fig:UniCirc_stoch} that generates
the same Kraus operators.
\end{proof}

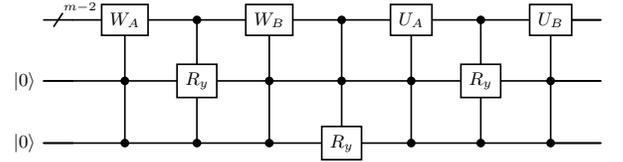
\begin{figure}[htp]
    \centering
    \resizebox{.45\textwidth}{!}{
    \begin{quantikz}
                        & \qwbundle{m-2} & \gate{W_A}  & \ctrl{1}   & \gate{W_B} & \ctrl{1}   & \gate{U_A} & \ctrl{1}   & \gate{U_B} & \qw \\
    \lstick{$\ket{0}$}  & \qw           & \ctrl{-1}       &\gate{R_y}  & \ctrl{-1}      & \ctrl{1}   & \ctrl{-1}      & \gate{R_y} & \ctrl{-1}      & \qw \\
    \lstick{$\ket{0}$}  & \qw           & \ctrl{-1}       & \ctrl{-1}  & \ctrl{-1}      & \gate{R_y} & \ctrl{-1}      & \ctrl{-1}  & \ctrl{-1}      & \qw \\
    \end{quantikz}
    }
    \caption{The decomposition of the canonical form of a 2-qubit-output quantum stochastic process.}
    \label{fig:2out_circ}
\end{figure}

For multi-qubit-output quantum stochastic processes, we may use the same method recursively.
For example, for a 2-qubit-output quantum stochastic process, we may first pick one of the qubits as the output and simulate the process with the method above.
Then the unitary $\tilde{U}_0$, $\tilde{U}_1$, $W_0$, and $W_1$ act on the memory system and the other output qubit.
We may apply the CS decomposition again to $\tilde{U}_0$, $\tilde{U}_1$, $W_0$, and $W_1$ respectively to get the canonical forms, which are decomposed to controlled $R_y$ gates and controlled local unitary gates that act on the memory system as shown in \fref{fig:2out_circ}.
However, although simplified, as this construction is universal for predictive models, it is not surprising that the number of parameters still grows exponentially with the number of output qubits.

\end{document}